\theoremstyle{plain}
\newtheorem{theorem}{Theorem}
\newtheorem{corollary}[theorem]{Corollary}
\newtheorem{proposition}[theorem]{Proposition}
\newtheorem{conjecture}[theorem]{Conjecture}
\theoremstyle{definition}
\newtheorem{definition}[theorem]{Definition}
\newtheorem{remark}[theorem]{Remark}
\newtheorem{example}[theorem]{Example}
\newtheorem{convention}[theorem]{Convention}
\renewcommand{\mit}{\mathit}
\newcommand{\mcl}{\mathcal}
\newcommand{\msf}{\mathsf}
\newcommand{\mbb}{\mathbb}
\newcommand{\floor}[1]{\lfloor#1\rfloor}
\newcommand{\punc}[1]{\:\text{#1}}
\newcommand{\del}[1]{}
\newcommand{\sub}[2]{#1_{#2}}
\newcommand{\super}[2]{#1^{#2}}
\newcommand{\funap}[2]{#1(#2)}
\newcommand{\juxt}[2]{#1#2}
\newcommand{\mylam}[2]{\lambda#1.#2}
\newcommand{\mymu}[2]{\mu#1.#2}
\newcommand{\infinity}{\infty}
\newcommand{\BNFis}{\mathrel{{:}{:}{=}}}
\newcommand{\BNFor}{\mathrel{|}}
\newcommand{\subtrmat}[2]{#1|_{#2}}
\newcommand{\posemp}{\epsilon}
\newcommand{\apos}{p}
\newcommand{\posconcat}[2]{#1#2}
\newcommand{\pos}{\funap{\mathcal{P}\!os}}
\newcommand{\sterm}{\mit{Ter}}
\newcommand{\term}{\funap{\sterm}}
\newcommand{\infterm}{\funap{\sterm^\infinity}}
\newcommand{\lambdaterm}{\term{\lambda}}
\newcommand{\lterm}{\lambdaterm}
\newcommand{\lamcal}{\lambda\beta}
\newcommand{\inflamcal}{\lambda^{\infinity}\beta}
\newcommand{\treeap}{\cdot}
\newcommand{\annotatednode}[2]{node [yshift=4mm] {$[#2]$} node {\hphantom{#1}\vphantom{#1}}}
\newcommand{\pairlft}{{\langle}}
\newcommand{\pairrgt}{{\rangle}}
\newcommand{\pairsep}{{,\,}}
\newcommand{\pairstr}[1]{\pairlft#1\pairrgt}
\newcommand{\pair}[2]{\pairstr{#1\pairsep#2}}
\newcommand{\tuple}{\pairstr}
\newcommand{\nat}{\mbb{N}}
\newcommand{\subst}[3]{#1[#2/#3]}
\newcommand{\length}[1]{|#1|}
\newcommand{\aclock}{%
\begin{tikzpicture}%
  \foreach \a in {0,30,...,330} {
    \draw [very thick] (\a:1.8mm) -- (\a:2.5mm);
  }
  \draw [thick] (0:0mm) -- (170:1.7mm);
  \draw [very thick] (0:0mm) -- (40:1.5mm);
\end{tikzpicture}%
}
\newcommand{\aaclock}{%
\begin{tikzpicture}[very thin]%
  \fill (0:2.5mm) arc (0:60:2.5mm) -- (60:1mm) arc (60:0:1mm) -- cycle;
  \fill (120:2.5mm) arc (120:180:2.5mm) -- (180:1mm) arc (180:120:1mm);
  \fill (240:2.5mm) arc (240:300:2.5mm) -- (300:1mm) arc (300:240:1mm);
  \fill (0,0) circle (0.5mm);
\end{tikzpicture}%
}
\newcommand{\sink}{\bot}
\newcommand{\sbohm}{\msf{BT}}
\newcommand{\bt}{\funap{\sbohm}}
\newcommand{\scbohm}{\sbohm_{\hspace*{-.4ex}\scalebox{.5}{\text{\aclock}}}\hspace*{-.1ex}}
\newcommand{\cbohm}{\funap{\scbohm}}
\newcommand{\slevi}{\msf{LLT}}
\newcommand{\sclevi}{\slevi_{\hspace*{-.4ex}\scalebox{.5}{\text{\aclock}}}\hspace*{-.1ex}}
\newcommand{\clevi}{\funap{\sclevi}}
\newcommand{\sber}{\msf{BeT}}
\newcommand{\scber}{\sber_{\hspace*{-.4ex}\scalebox{.5}{\text{\aclock}}}\hspace*{-.1ex}}
\newcommand{\cber}{\funap{\scber}}
\newcommand{\sabohm}{\sbohm_{\hspace*{-.4ex}\scalebox{.5}{\text{\aaclock}}}\hspace*{-.1ex}}
\newcommand{\abohm}{\funap{\sabohm}}
\newcommand{\salevi}{\slevi_{\hspace*{-.4ex}\scalebox{.5}{\text{\aaclock}}}\hspace*{-.1ex}}
\newcommand{\saber}{\sber_{\hspace*{-.4ex}\scalebox{.5}{\text{\aaclock}}}\hspace*{-.1ex}}
\newcommand{\relat}[2]{#1_{#2}}
\newcommand{\relev}[1]{#1_{\exists}}
\newcommand{\asclock}{\raisebox{-.4ex}{{\hspace*{-.1ex}\scalebox{.5}{\text{\aclock}}}\hspace*{.1ex}}}
\newcommand{\crel}[1]{\mathrel{#1_{\asclock}}}
\newcommand{\crelat}[2]{\mathrel{#1_{\asclock,#2}}}
\newcommand{\crelev}[1]{\mathrel{#1_{\asclock,\exists}}}
\newcommand{\creli}[1]{\mathrel{\crelat{#1}{\infty}}}
\newcommand{\treeequal}[1]{=_{#1}}
\newcommand{\annotate}[2]{[#1]#2}
\newcommand{\deannotate}[1]{\floor{#1}}
\newcommand{\smred}{{\twoheadrightarrow}}
\newcommand{\mred}{\mathrel{\smred}}
\newcommand{\redi}{\leftarrow}
\newcommand{\smredi}{{\twoheadleftarrow}}
\newcommand{\mredi}{\mathrel{\smredi}}
\newcommand{\sconv}{\sub{{=}}{\beta}}
\newcommand{\conv}{\mathrel{\sconv}}
\newcommand{\notconv}{\not\conv}
\newcommand{\nconv}{\notconv}
\newcommand{\sdefeq}{{=}}
\newcommand{\defeq}{\mathrel{\sdefeq}}
\newcommand{\redn}{\super{\to}}
\newcommand{\shred}{\to_\mit{h}}
\newcommand{\hred}{\mathrel{\shred}}
\newcommand{\hredn}[1]{\mathrel{\shred^{#1}}}
\newcommand{\hredeq}{\mathrel{\shred^{\equiv}}}
\newcommand{\smhred}{{\twoheadrightarrow_\mit{h}}}
\newcommand{\mhred}{\mathrel{\smhred}}
\newcommand{\hredat}[1]{\mathrel{\to_{\mit{h},#1}}}
\newcommand{\shredi}{{\redi_h}}
\newcommand{\hredi}{\mathrel{\shredi}}
\newcommand{\wash}{\mathpalette\mywash}
\newcommand{\mywash}[2]{\setbox0=\hbox{$\m@th#1{#2}$}\wd0=0pt\box0}
\newcommand{\arsdev}{\mathrel{\wash{\,\,\,\circ}\mathord{\longrightarrow}}}
\newcommand{\dred}{\arsdev}
\newcommand{\levi}{L\'{e}vy--Longo}
\newcommand{\ber}{Berarducci}
\newcommand{\bohm}{B\"{o}hm}
\newcommand{\boehm}{\bohm}
\newcommand{\sfpc}{Y}
\newcommand{\fpc}{\sub{\sfpc}}
\newcommand{\fpcC}{\fpc{0}}
\newcommand{\fpcT}{\fpc{1}}
\newcommand{\afpc}{\sfpc}
\newcommand{\awfpc}{Z}
\newcommand{\cxthole}{\Box}
\newcommand{\app}{\juxt}
\newcommand{\leftappiterate}[3]{\app{#1}{\super{#2}{{\sim}#3}}}
\newcommand{\rightappiterate}[3]{\app{\super{#1}{#2}}{#3}}
\newcommand{\nf}{\underline}
\newcommand{\posexp}[2]{#1 \times #2}    
\newcommand{\slang}{\mcl{L}}
\newcommand{\lang}{\sub{\slang}}
\newcommand{\setemp}{{\varnothing}}
\newcommand{\UN}{\msf{UN}}
\newcommand{\UNinf}{\UN^\infinity}
\newcommand{\CR}{\msf{CR}}
\newcommand{\CRinf}{\CR^\infinity}
\begin{document}

\title{Modular Construction of Fixed Point Combinators and Clocked B\"{o}hm Trees}

\author{%
  J\"{o}rg Endrullis \quad\quad Dimitri Hendriks \quad\quad Jan Willem Klop\\[1ex]
  Vrije Universiteit Amsterdam,
  Department of Computer Science\\
  De Boelelaan 1081a,
  1081 HV Amsterdam,
  The Netherlands\\[.5ex]
  {\normalsize{\texttt{joerg@few.vu.nl \quad diem@cs.vu.nl \quad jwk@cs.vu.nl}}}
}

\date{}


\maketitle
\thispagestyle{empty}

\begin{abstract}
  Fixed point combinators (and their generalization: looping combinators) are classic notions
belonging to the heart of $\lambda$-calculus and logic. We start with an exploration of the
structure of fixed point combinators (fpc's), vastly generalizing the well-known fact that if $Y$ is an fpc, 
$Y(SI)$ is again an fpc, generating the \boehm{} sequence of fpc's. 
Using the infinitary $\lambda$-calculus we devise infinitely many other generation schemes for fpc's.
In this way we find schemes and building blocks to construct new fpc's in a modular way. 

Having created a plethora of new fixed point combinators, the task is to prove that they are indeed new.
That is, we have to prove their $\beta$-inconvertibility. 
Known techniques via \boehm{} Trees do not apply, because all fpc's have the same \boehm{} Tree (BT). 
Therefore, we employ `clocked BT's', with annotations that convey information of the tempo in which 
the data in the BT are produced. BT's are thus enriched with an intrinsic clock behaviour, 
leading to a refined discrimination method for $\lambda$-terms. 
The corresponding equality is strictly intermediate between $\sconv$ and $=_\sbohm$, 
the equality in the classical models of $\lambda$-calculus. An analogous approach pertains to \levi{} and \ber{} trees.
Finally, we increase the discrimination power by a precision of the clock notion that we call `atomic clock'.

\end{abstract}

\begin{quotation}
  \noindent
  The theory of sage birds (technically called \emph{fixed point combinators}) 
  is a fascinating and basic part of combinatory logic; 
  we have only scratched the surface.
  \ \\[0.75ex]
  \hspace*{\fill} R.~Smullyan~\cite{smull:1985}.
\end{quotation}

\section{Introduction}

\bohm{} trees constitute a well-known method to discriminate $\lambda$-terms $M$, $N$:
if $\bt{M}$ and $\bt{N}$ are not identical, then $M$ and $N$ are $\beta$-inconvertible, $M \nconv N$.
But how do we prove $\beta$-inconvertibility of $\lambda$-terms with the same BT?
This question was raised in Scott~\cite{scott:1975} for the interesting equation $BY = BYS$ 
between terms that as Scott noted are presumably $\beta$-inconvertible, yet BT-equal ($=_{\sbohm}$).
Scott used his Induction Rule to prove that $BY =_{\sbohm} BYS$;
instead we will employ below the infinitary $\lambda$-calculus with the same effect,
but with more convenience for calculations as a direct generalization of finitary $\lambda$-calculus.
Often one can solve such a $\beta$-discrimination problem by finding a suitable invariant 
for all the $\beta$-reducts of $M$, $N$.
Below we will do this by way of preparatory example for the fpc's in the \bohm{} sequence.
But a systematic method for this discrimination problem has been lacking, 
and such a method is one of the two contributions of this paper.

\begin{figure}[h]
\begin{center}
\begin{tikzpicture}[thick,node distance=13mm,inner sep=0.5mm]
  \node (BT) {$\sbohm$};
  \node (cBT) [left of=BT,yshift=-.6cm]{$\scbohm$};
  \node (aBT) [left of=cBT,yshift=-.6cm]{$\sabohm$};

  \node (LLT) [below of=BT] {$\slevi$};
  \node (cLLT) [left of=LLT,yshift=-.6cm]{$\sclevi$};
  \node (aLLT) [left of=cLLT,yshift=-.6cm]{$\salevi$};
  
  \node (BeT) [below of=LLT] {$\sber$};
  \node (cBeT) [left of=BeT,yshift=-.6cm]{$\scber$};
  \node (aBeT) [left of=cBeT,yshift=-.6cm]{$\saber$};

  \node (beta) [below of=aBeT]{$\conv$};

  \draw (BT) -- (cBT) -- (aBT);
  \draw (LLT) -- (cLLT) -- (aLLT);
  \draw (BeT) -- (cBeT) -- (aBeT);

  \draw (BT) -- (LLT) -- (BeT);
  \draw (cBT) -- (cLLT) -- (cBeT);
  \draw (aBT) -- (aLLT) -- (aBeT);
  
  \draw (aBeT) -- (beta);
\end{tikzpicture}
\vspace{-2ex}
\caption{Comparison of (atomic) clock semantics and unclocked semantics. Higher means more identifications.}
\label{fig:overivew}
\vspace{-4ex}
\end{center}
\end{figure}
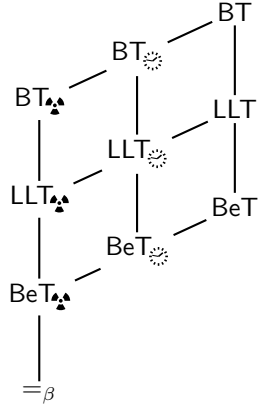

Actually, the need for such a strategic method was forced upon us, by the other contribution,
because Scott's equation $BY = BYS$ turned out to be the key unlocking a plethora of new fpc's.
The new generation schemes are of the form: if $Y$ is a fpc, then $YP_1\ldots P_n$ is a fpc,
abbreviated as $Y \Rightarrow YP_1\ldots P_n$.
So $\cxthole P_1\ldots P_n$ is a `fpc-generating' vector,
and can be considered as a building block to make new fpc's. 
But are they indeed new? 
A well-known example of a (singleton)-fpc-generating vector is $\cxthole\delta$, where $\delta = SI$,
giving rise when starting from Curry's fpc to the \boehm{} sequence of fpc's.
Here another interesting equation is turning up, namely $Y = Y \delta$, 
for an arbitrary fpc $Y$, considered by Statman and Intrigila.  
In fact, it is implied by Scott's equation:
\begin{align*}
  BY = BYS \,\,\Longrightarrow\,\, BYI = BYSI \,\,\Longleftrightarrow\,\, Y = Y\delta
\end{align*}
The first equation $BY = BYS$ will yield many new fcp's, built in a modular way;
the last equation $Y = Y\delta$ addresses the question whether they are indeed new.
Finding ad hoc invariant proofs for their novelty is too cumbersome. 
But fortunately, it turns out that although the new fpc's all have the same BT, namely $\mylam{f}{f^\omega}$,
they differ in the way this BT is formed, in the `tempo of formation', where the ticks of the clock are head reduction steps.
More generally, we can discern a clock-like behaviour of BT's, that enables us to discriminate the terms in question.
However, this refined discrimination method does not work for all $\lambda$-terms; only for a class of `simple' terms,
that still is fairly extensive; it includes all fpc's that are constructed in the modular way that we present, thereby solving our novelty problem.
In fact, we gain some more ground: though our discrimination method works best 
for pairs of simple terms, it can also fruitfully be applied
to compare a simple term with a non-simple term,
and with some more effort, we can even compare
and discriminate two non-simple terms.

Even so, many pairs of fpc's cannot yet be discriminated,
because they not only have the same BT,
they also have the same clocked BT. Therefore, in a final grading up of the precision, 
we introduce `atomic clocks', where the actual position of a head reduction step is administrated. 
All this pertains not only to the BT-semantics, but also to \levi{} Trees (LLT) (or lazy trees), 
and Berarducci Trees (BeT) (or syntactic trees). 
Many problems stay open, in particular problems generalizing the equation of Statman and Intrigila, 
when arbitrary fpc's are considered --- indeed, we have only scratched the surface.

\section{Preliminaries}
To make this paper moderately self-contained,
and to fix notations, we lay out some ingredients.
For $\lambda$-calculus we refer to \cite{bare:1984} and \cite{beth:2003}.
For an introduction to \boehm{}, \ber{} and \levi{} trees, 
we refer to \cite{bare:1984,abra:ong:1993,beth:klop:vrij:2000,bare:klop:2009}.

\begin{definition}
  $\lambda$-terms are defined by the grammar:
  \begin{align*}
  M & \BNFis x \BNFor \mylam{x}M \BNFor MM
  \end{align*}
  We let $\lterm$ denote the set of $\lambda$-terms, 
  and use $M,N,\ldots$ to range over the elements of $\lterm$.
  The relation $\to_\beta$ is the compatible closure (i.e., closure under term formation)
  of the $\beta$-rule:
  \begin{align}
    (\mylam{x}{M})N & \to \subst{M}{N}{x}
    \tag{$\beta$}
  \end{align}
  where $\subst{M}{N}{x}$ denotes the result of substituting $N$ 
  for all free occurrences of $x$ in $M$.
  Furthermore, we use $\smred_\beta$ to denote the reflexive--transitive closure of $\to_\beta$.
  We write $M \conv N$ to denote that $M$ is $\beta$-convertible with $N$,
  i.e., $\sconv$ is the equivalence closure of $\to_\beta$.
  For syntactic equality (modulo renaming of bound variables), we use $\equiv$.
  We will often omit the subscript $\beta$ in $\to_\beta$ and $\smred_\beta$, 
  but not so for $\conv$, in order to reserve $=$ for definitional equality.
  
  A $\lambda$-term $M$ are called \emph{normal form} if there exists no $N$ with $M \to N$.
  We say that a term $M$ \emph{has a normal form} if it reduces to one.
  For $\lambda$-terms $M$ having a normal form we write
  $\nf{M}$ for the unique normal form $N$ with $M \mred N$ (note that
  uniqueness follows from confluence of the $\lambda$-calculus).
\end{definition}

Some commonly used combinators are:
\begin{align*}
  I & \defeq \mylam{x}{x} &
  S & \defeq \mylam{xyz}{xz(yz)} &
  B & \defeq \mylam{xyz}{x(yz)}
\end{align*}

\begin{definition}
  A \emph{position} is a sequence over $\{0,1,2\}$.
  The \emph{subterm $\subtrmat{M}{\apos}$ of $M$ at position $\apos$}
  is defined by:
  \begin{align*}
    \subtrmat{M}{\posemp} &= M
    &
    \subtrmat{(MN)}{\posconcat{1}{\apos}} 
    & = \subtrmat{M}{\apos} 
    \\
    \subtrmat{(\mylam{x}{M})}{\posconcat{0}{\apos}} 
    & = \subtrmat{M}{\apos} 
    &
    \subtrmat{(MN)}{\posconcat{2}{\apos}} 
    & = \subtrmat{N}{\apos}
  \end{align*}
  $\pos{M}$ is the set of positions $\apos$ such that $\subtrmat{M}{\apos}$ is defined.
\end{definition}

\begin{definition}
  \hfill
  \begin{enumerate}
    \item 
      A term $\afpc$ 
      is an \emph{fpc} 
      if $\afpc x \conv x(\afpc x)$.
    \item 
      An fpc $\afpc$ is \emph{$k$-reducing} 
      if $\afpc x \redn{k} x(\afpc x)$. 
    \item 
      A term $\awfpc$ is a \emph{weak fpc (wfpc)} 
      if $\awfpc x \conv x (\awfpc' x)$
      where $\awfpc'$ is a wfpc.
  \end{enumerate}
\end{definition}
\noindent
A wfpc is alternatively defined as a term having the same B\"{o}hm tree
as an fpc, namely $\mylam{x}{x^{\omega}} \equiv \mylam{x}{x(x(x(\ldots}$\,. \\
Weak fpc's are known in foundational studies of type systems 
as \emph{looping combinators}; see, e.g., \cite{coqu:herb:1994} and \cite{geuv:wern:1994}.
\begin{example}
  Define by double recursion, $Z$ and $Z'$ such that 
  $Zx = x(Z'x)$ and $Z'x = x(Zx)$.
  Then $Z,Z'$ are both wfpc's, and $Zx = x(x(Zx))$. 
  So $Z$ delivers its output twice as fast as an ordinary fpc, but the generator flipflops.
\end{example}
As to `double recursion', \cite{klop:2007}
collects several proofs of the double fixed point theorem, 
including some in~\cite{bare:1984,smull:1985}.

\begin{definition}
  \hfill
  \begin{enumerate}
  \item 
  A \emph{head reduction step $\hred$} is a $\beta$-reduction step of the form:
  \[\mylam{x_1\ldots x_n}{(\mylam{y}{M})N N_1 \ldots N_m} 
   \to \mylam{x_1\ldots x_n}{(\subst{M}{y}{N})N_1 \ldots N_m}\]
  with $n,m \ge 0$.
  \item 
    Accordingly, a \emph{head normal form (hnf)} is a $\lambda$-term
    of the form \[\mylam{x_1}{\ldots\mylam{x_n}{y N_1 \ldots N_m}}\] with $n, m \ge 0$.
  \item
    A \emph{weak head normal form (whnf)} is 
    an hnf or an abstraction,
    that is, a whnf is a term of the form $x M_1 \ldots M_m$ or $\mylam{x}{M}$.
  \item
    A term \emph{has a (weak) hnf} if it reduces to one.
  \item
    We call a term \emph{root-stable} 
    if it does not reduce to a redex: $(\mylam{x}{M}) N$.
    A~term is called \emph{root-active} if it does not reduce
    to a root-stable term.
  \end{enumerate}
\end{definition}

\paragraph{Infinitary $\lambda$-calculus $\boldsymbol{\inflamcal}$.}
We will only use the infinitary $\lambda$-calculus $\inflamcal$ for some simple calculations 
such as $(\mylam{ab}{(ab)^\omega}) I = \mylam{b}{(I b)^\omega} = \mylam{b}{b^\omega}$.
For a proper setup of $\inflamcal$ we refer to
\cite{bera:intr:1996,kenn:klop:slee:vrie:1997,kenn:vrie:2003,bare:klop:2009}.
%
In a nutshell, 
$\inflamcal$ extends finitary $\lambda$-calculus by admitting infinite $\lambda$-terms, 
the set of which is called $\infterm{\lambda}$, 
and infinite reductions (in~\cite[Ch.~12]{terese:2003} and~\cite{bare:klop:2009} 
possibly transfinitely long, in~\cite{bera:intr:1996} of length $\leq \omega$).
Limits of infinite reduction sequences are obtained by a strengthening of Cauchy-convergence, 
stipulating that the depth of contracted redexes must tend to infinity. 
The $\inflamcal$-calculus is not infinitary confluent ($\CRinf$), 
but still has unique infinite normal forms ($\UNinf$). 
\boehm{} Trees (BT's) without $\bot$ are infinite normal forms in 
$\inflamcal$.
But beware, the reverse does not hold, e.g.\ $\mylam{x}{(\mylam{x}{(\mylam{x}{\ldots}}}$
is an infinite normal form, but not a BT; it is in fact an LLT (\levi{} Tree, 
and also a BeT (Berarducci Tree). 
The notions BT, LLT, BeT are defined e.g.\ in~\cite{bare:klop:2009},
and in~\cite{beth:klop:vrij:2000}.
These notions are also defined in Sections~\ref{sec:clocked} and~\ref{sec:levi}, 
via their clocked versions.

\begin{definition}
  For terms $A,B$ we define
  $\leftappiterate{A}{B}{n}$ and $\rightappiterate{A}{n}{B}$:
  \begin{align*}
  \leftappiterate{A}{B}{0}    & = A 
  &&& 
  \rightappiterate{A}{0}{B}   & = B 
  \\
  \leftappiterate{A}{B}{n+1}  & = \leftappiterate{\app{A}{B}}{B}{n} 
  &&& 
  \rightappiterate{A}{n+1}{B} & = \app{A}{(\rightappiterate{A}{n}{B})}
  \end{align*}
  A context of the form $\leftappiterate{\cxthole}{B}{n}$ 
  is called a \emph{vector}.
  For the vector notation, it is to be understood that 
  term formation gets highest priority,
  i.e., 
  $\leftappiterate{AB}{C}{n} = \leftappiterate{(AB)}{C}{n}$.
\end{definition}

%
%
%



\section{The \boehm{} Sequence}
There are several ways to make fpc's. 
For heuristics behind the construction of Curry's fpc 
$Y_0 \defeq \mylam{f}{\omega_{f}\omega_{f}}$,
with $\omega_{f} \defeq \mylam{x}{f(xx)}$, 
and Turing's fpc $\fpcT \defeq \eta\eta$
with  $\eta \defeq \mylam{xf}{f(xxf)}$,
see~\cite{klop:2007}.
The following is an easy exercise. 
\begin{proposition}\label{prop:Y0neqY1}
  The fpc's $\fpcC$ and $\fpcT$ are not $\beta$-convertible. 
\end{proposition}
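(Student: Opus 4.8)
The plan is to exhibit a reduction-invariant property that separates the $\beta$-reduction graphs of $\fpcC$ and $\fpcT$, or — more economically — to distinguish them by a computable quantity that is preserved under $\beta$-conversion. Since $\fpcC$ and $\fpcT$ are both closed normal forms up to one head step, the cleanest route is: first compute the head reduction behaviour of each. For Curry's $\fpcC = \mylam{f}{\omega_f\omega_f}$ with $\omega_f = \mylam{x}{f(xx)}$, one head step gives $\fpcC f \hred f(\omega_f\omega_f)$, and $\omega_f\omega_f \hred f(\omega_f\omega_f)$ again, so after applying to $f$ the term needs exactly one head step to reach an hnf. For Turing's $\fpcT = \eta\eta$ with $\eta = \mylam{xf}{f(xxf)}$, we have $\fpcT f = \eta\eta f \hred (\mylam{f}{f(\eta\eta f)})f \hred f(\eta\eta f) = f(\fpcT f)$, taking two head steps. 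So the quantity ``number of head steps from $Yf$ to its first hnf'' is $1$ for $\fpcC$ and $2$ for $\fpcT$.

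The main obstacle is that this counting quantity is not obviously $\beta$-invariant: a priori a term $\beta$-convertible to $\fpcC$ might reach its hnf in a different number of head steps, since $\beta$-conversion can go uphill. To make the argument rigorous I would instead argue \emph{structurally on the normal form}, which sidesteps invariance entirely. Suppose toward a contradiction that $\fpcC \conv \fpcT$. Both are weakly normalizing: $\fpcC$ is already a normal form $\mylam{f}{\omega_f\omega_f}$, and $\fpcT = \eta\eta$ reduces to the normal form $\mylam{f}{f(\eta\eta f)}$? — no, that still contains the redex $\eta\eta$; in fact $\fpcT$ has no normal form. So the symmetric move is: $\fpcC$ \emph{does} have a normal form (itself), hence if $\fpcC \conv \fpcT$ then by confluence $\fpcT \mred \fpcC$, i.e.\ $\eta\eta \mred \mylam{f}{\omega_f\omega_f}$. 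Now analyse this reduction: $\eta\eta$ has a unique head redex; the only reductions available quickly force the term through $\mylam{f}{f(\eta\eta f)}$, and from there any reduct has the form $\mylam{f}{f(\ldots)}$ — the head variable under the $\lambda f$ is always the bound $f$ applied to something — whereas in $\mylam{f}{\omega_f\omega_f}$ after contracting we get $\mylam{f}{f(\omega_f\omega_f)}$, also of that shape, so I must dig one level deeper.

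Concretely, the clean invariant is: every reduct of $\fpcT = \eta\eta$ has the property that it is \emph{not} a normal form (it always contains a copy of a term $\beta$-convertible to $\eta\eta$, or more carefully: $\eta\eta$ is root-active-free but never reaches a $\to$-normal form, because each reduct still exhibits a residual $\eta\eta$-style redex). I would prove by induction on $\mred$ that no reduct of $\eta\eta$ is a $\beta$-normal form, using that $\eta$ duplicates its argument so a residual redex is always regenerated. Since $\fpcC$ \emph{is} a $\beta$-normal form, confluence then gives $\fpcC \nconv \fpcT$. The step I expect to be fiddly is the induction showing $\eta\eta$ has no normal form: one tracks that after the two head steps $\eta\eta f \mred f(\eta\eta f)$, the subterm $\eta\eta$ reappears, and one must check no clever non-head reduction can eliminate it — this is routine but requires care with residuals, and is exactly the kind of ad hoc invariant argument the introduction of this paper warns is ``too cumbersome'' in general, though here it is short.
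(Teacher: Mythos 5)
There is a fatal error in your fallback argument: Curry's fpc $\fpcC \equiv \mylam{f}{\omega_f\omega_f}$ with $\omega_f \equiv \mylam{x}{f(xx)}$ is \emph{not} a $\beta$-normal form. The body $\omega_f\omega_f \equiv (\mylam{x}{f(xx)})(\mylam{x}{f(xx)})$ is itself a redex, contracting to $f(\omega_f\omega_f)$, which again contains $\omega_f\omega_f$; so $\fpcC$, like $\fpcT$, has \emph{no} normal form at all. You correctly caught this phenomenon for $\fpcT = \eta\eta$ but overlooked it for $\fpcC$, and your entire ``structural'' route --- ``$\fpcC$ does have a normal form (itself), hence by confluence $\fpcT \mred \fpcC$'' --- collapses with that false premise. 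The induction you propose (``no reduct of $\eta\eta$ is a normal form'') is true but useless here, since the same holds for $\fpcC$ and no contradiction results.

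Your first instinct --- counting head reduction steps to hnf --- is in fact the right one and is exactly the clocked-B\"{o}hm-tree method this paper develops: $\cbohm{\fpcC f}$ has annotations $2,1,1,1,\ldots$ while $\cbohm{\fpcT f}$ has $2,2,2,2,\ldots$ (Figure~\ref{fig:boem:y0:y1}), and since both terms are simple, Proposition~\ref{cor:simple:simple} makes the count $\beta$-invariant in the sense needed. (Minor slip: $\fpcC f \hredn{2} f(\omega_f\omega_f)$ takes two head steps, not one; the difference from $\fpcT$ only appears from the second level of the tree onward.) The obstacle you identified --- that head-step counts are not obviously conversion-invariant --- is genuine, and overcoming it requires either the simple-terms machinery of Section~\ref{sec:clocked} or the more elementary route the paper actually uses for the whole B\"{o}hm sequence in Theorem~\ref{thm:boehm:seq}: exhibit reduction-closed, disjoint sets of reducts $\lang{0} \ni \fpcC$ and $\lang{1} \ni \fpcT$ (e.g.\ every abstraction in $\lang{1}$ has the shape $\mylam{f}{\rightappiterate{f}{k}{(P f)}}$ with a subterm $Pf$, which never occurs in $\lang{0}$), and conclude by confluence that no common reduct exists. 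You would need to replace your normal-form argument with one of these two invariants for the proof to go through.
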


It is well-known, as observed by C.~B\"{o}hm~\cite{bohm:1963} and others, 
that the class of fpc's coincides exactly with the class of fixed points 
of the peculiar term $\delta = \mylam{ab}{b(ab)}$, convertible with $S I$. 
The notation $\delta$ is convenient for calculations 
and stems from
~\cite{intri:1997}.

This term also attracted the attention of R.~Smullyan,
in his beautiful fable about fpc's figuring as birds 
in an enchanted forest: 
  ``An extremely interesting bird is the owl $O$ defined
    by the following condition: $Oxy = y(xy)$.'' 
  \cite
  {smull:1985}. 
We will return to the Owl in Remark~\ref{rem:owl} below.

Thus the term $Y \delta$ is an fpc whenever $Y$ is.
It follows that starting with $\fpc{0}$, Curry's fpc,
we have an infinite sequence of fpc's
$\fpc{0}, \fpc{0}\delta, \fpc{0}\delta\delta, \ldots, \leftappiterate{\fpc{0}}{\delta}{n}, \ldots$.
We call this sequence the \emph{\boehm{} sequence}.
We will indicate 
$\leftappiterate{\fpc{0}}{\delta}{n}$ by $\fpc{n}$. 
Note that indeed $\fpc{1}$, the notation that we had given to Turing's fpc, 
is correct in this naming convention. 
Now the question is whether all these `derived' fpc's are really new, 
in other words, whether the sequence is free of duplicates. 
This is {*}Exercise~6.8.9 in~\cite{bare:1984}. 

Note that we could also have started the sequence from another fpc than Curry's. 
Now for the sequence starting from an \emph{arbitrary} fpc $Y$, 
it is actually an open problem whether that sequence of fpc's 
$Y, Y\delta, Y\delta\delta, \ldots, \leftappiterate{Y}{\delta}{n}, \dots$ is free of repetitions. 
All we know, applying Intrigila's theorem, Theorem~\ref{thm:intrigila} below, 
is that no two consecutive fpc's 
in this sequence are convertible. But let us first consider the B\"{o}hm sequence. 

We show that the \boehm{} sequence contains no duplicates by 
determining the set of reducts of every $Y_n$.
For $\fpc{3} \equiv \eta \eta \delta \delta$,
the head reduction is displayed in 
Figure~\ref{fig:Y3:head:reduction:graph},
\begin{figure}[ht!]
  \begin{center}
  \begin{tikzpicture}[thick]
  \node (Y3) {$\eta \eta \delta \delta x$};
  \node (R) [right of=Y3,node distance=25mm] {$x(\eta \eta \delta \delta x)$};
  \draw [->,shorten >= 2mm,shorten <= 2mm] (Y3) -- (R) node [very near end,below] {$h$} node [very near end,above] {$6$};
  \draw [gray,line width=3pt] ($(R.south west)+(4mm,.0mm)$) -- ($(R.south east)+(-2mm,.0mm)$);
  \draw [->,gray,line width=3pt] ($(R.south west)!.5!(R.south east)+(1mm,.0mm)$) -- +(0mm,-.5cm) -| ($(Y3.south west)!.5!(Y3.south east)$);
  \end{tikzpicture}
  \vspace{-1ex}
  \caption{Head reduction of $\boldsymbol{\fpc{3} x}$.}
  \label{fig:Y3:head:reduction:graph}
  \vspace{-3ex}
  \end{center}
\end{figure}
but this is by no means the whole reduction graph.
For future reference we note that the head reduction diagram suggests a `clock behaviour'.

\begin{theorem}\label{thm:boehm:seq}
  The B\"{o}hm sequence contains no duplicates.
\end{theorem}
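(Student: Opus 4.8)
The plan is to exhibit a reduction-invariant that distinguishes the terms $\fpc{n}$ pairwise. Concretely, I would aim to prove that for each $n$ there is a set $\mcl{R}_n$ of $\lambda$-terms such that (i) $\fpc{n} \in \mcl{R}_n$, (ii) $\mcl{R}_n$ is closed under $\beta$-reduction, i.e. if $M \in \mcl{R}_n$ and $M \to N$ then $N \in \mcl{R}_n$, and (iii) $\mcl{R}_n \cap \mcl{R}_m = \varnothing$ whenever $n \neq m$. Since $\conv$ is generated by $\to$ in both directions, confluence of the $\lambda$-calculus reduces the task to checking that reducts of $\fpc{n}$ never fall into $\mcl{R}_m$ for $m \neq n$: if $\fpc{n} \conv \fpc{m}$ then by confluence they have a common reduct $P$, and $P \in \mcl{R}_n \cap \mcl{R}_m$, contradiction. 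So everything hinges on finding a good description of the reducts of $\fpc{n} = \leftappiterate{\fpc{0}}{\delta}{n}$ (equivalently, as remarked in the excerpt, of $\eta\eta\underbrace{\delta\cdots\delta}_{n}$, using $\fpc{1} \equiv \eta\eta$).

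The key idea for building $\mcl{R}_n$ is the ``clock behaviour'' already flagged after Figure~\ref{fig:Y3:head:reduction:graph}: applied to a fresh variable $x$, the term $\fpc{n}x$ head-reduces in a fixed number of steps — call it $c(n)$ — to $x(\fpc{n}x)$, and more generally the reduct graph of $\fpc{n}$ (or of $\fpc{n}x$) consists of finitely many ``shapes'' glued into a cycle by this head-reduction loop, with side-branches that themselves terminate or feed back. I would first compute $c(n)$ explicitly: one head-reduces $\eta\eta\delta^{\,k}x$, tracking that each extra $\delta$ contributes a bounded constant number of head steps, so $c(n)$ is a strictly increasing (in fact affine) function of $n$ — from the excerpt $c(3) = 6$, and Curry's/Turing's cases give the low values. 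Then I would define $\mcl{R}_n$ to be the set of all $\beta$-reducts of $\fpc{n}$ (closure under $\to$ is then automatic), and prove the disjointness (iii) by showing that from any member of $\mcl{R}_n$ one can recover the invariant $c(n)$ — for instance, every term in $\mcl{R}_n$ reduces to some term of the form $C[\fpc{n}]$ or $C[\fpc{n}x]$ whose head-reduction-to-next-$x$ count is $c(n)$, or more robustly: $\mcl{R}_n$ and $\mcl{R}_m$ being both closed under reduction and both containing a common term would force $\fpc{n}$ and $\fpc{m}$ to have a common reduct $P$; reducing $P$ far enough to expose the cyclic head-reduction structure reads off $c(n) = c(m)$, so $n = m$.

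The main obstacle is step (iii): controlling \emph{all} reducts of $\fpc{n}$, not just the head reduction sequence. A naive "number of head steps to produce the next $x$" is not obviously stable under non-head (internal) reduction, since internal reductions inside the $\delta$'s or inside copied subterms could a priori shorten later head phases. The remedy I would pursue is to give a finite, explicit catalogue of the reduction graph of $\fpc{n}x$: list the finitely many term-shapes reachable from $\fpc{n}x$ (parametrised by how far each internal redex has progressed and by the accumulated prefix of $x$'s), check by a finite case analysis that $\to$ stays within this catalogue, and observe that the catalogue for index $n$ is not isomorphic (as a labelled graph, or simply: has a different fundamental cycle length through an $x$-production) to that for index $m \ne n$. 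This is exactly the kind of finite-but-tedious bookkeeping the paper alludes to when it says "determining the set of reducts of every $Y_n$"; the conceptual content is entirely in isolating the right finite invariant (the head-cycle length $c(n)$), and I expect the write-up to proceed by first doing $\fpc{3}$ in full as the representative case and then indicating how the pattern generalises, precisely as the surrounding text sets up.
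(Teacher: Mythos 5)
Your skeleton coincides with the paper's: the proof there defines, for each $n$, a grammar $\lang{n}$ generating exactly the $\beta$-reducts of $\fpc{n}$, verifies closure of $\lang{n}$ under $\to_\beta$ by a finite case analysis on the productions, and concludes $\fpc{n}\nconv\fpc{m}$ from disjointness via confluence --- this is your $\mcl{R}_n$, and your ``finite catalogue parametrised by the accumulated prefix'' is precisely such a grammar (note the reduct set is infinite, so the unbounded parameters are essential). The genuine difference is the disjointness invariant. The paper reads a \emph{static} quantity off any single term of $\lang{n}$: for $n,m>1$ it counts \emph{passive} occurrences of $\delta$ (those occurring as the argument of an application $P\delta$), which pins down $n$ immediately, with a separate small structural argument to separate $\lang{0}$ from $\lang{1}$; no reasoning about head-reduction lengths is needed once the grammar is in hand. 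Your invariant is instead the \emph{dynamic} head-cycle length $c(n)$, and this is also viable --- indeed the paper itself later computes the eventual clock of $\fpc{n}$ to be $2n$ for $n\ge 2$ (Example~\ref{ex:boehm:seq}) and notes that this yields an alternative proof of Theorem~\ref{thm:boehm:seq} via Theorem~\ref{thm:simple} --- but it is the costlier route, for exactly the reason you flag yourself: the head-step count is not stable under internal reduction (already the root annotation of $\cbohm{\fpcC f}$ drops from $2$ to $1$ after one step, and $\cbohm{\fpcC f}$ and $\cbohm{\fpcT f}$ agree at the root), so only the \emph{eventual} cycle length is an invariant, and justifying that requires either the full reduct catalogue (after which the paper's static count comes for free and the cycle-length detour is unnecessary) or the ``simple terms'' machinery of Section~\ref{sec:clocked}. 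In short: correct plan, same decomposition into closure plus disjointness plus confluence, but your choice of invariant trades the paper's one-line syntactic count for a dynamic quantity whose reduction-stability is the hardest step of your own outline.
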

\begin{proof}
  (See also~\cite{klop:2007}.)
  We define languages $\lang{n}\subseteq\lterm$
  where $\lang{n}$ is the set of $\smred$-reducts of $Y_n$.
  For $n \geq 1$ we take $Y_n = \leftappiterate{\eta\eta}{\delta}{n-1}$,
  where $\eta \equiv \mylam{xf}{f(xxf)}$.
  \begin{align*}
  \lang{0} & \BNFis 
    \mylam{f}{\rightappiterate{f}{k}{(\omega_f\omega_f)}}
    && (k \geq 0)
  \\
  \lang{1} & \BNFis 
    \eta\eta %
    \BNFor 
    \mylam{f}{\rightappiterate{f}{k}{(\lang{1} f)}} %
    && (k > 0)
  \\
  \lang{n} & \BNFis 
    \lang{n-1}\delta 
    \BNFor 
     \delta \lang{n}
    \BNFor 
    \mylam{b}{b^k (\lang{n}b)}
    && (n > 1, k > 0)
  \end{align*}
  Then we show that:
  \begin{enumerate}
  \item
  $\lang{n}$ is closed under \mbox{$\beta$-reduction}; and
  \item
    $\lang{n}$ and $\lang{m}$ are disjoint, for $n \neq m$.
  \end{enumerate}
  This implies that $Y_n \nconv Y_m$
  for all $n\ne m$.
  
  For $n \neq m$, $n > 1$, 
  (ii) follows by counting the number of passive $\delta$'s.
  An occurrence of $\delta$ is passive if it occurs as $P \delta$ for some $P$. 
  To see that $\lang{0} \cap \lang{1} = \setemp$,
  note that if $M \in \lang{1}$ 
  is an abstraction
  then $M \equiv \mylam{f}{\rightappiterate{f}{k}{(P f)}}$
  containing a subterm $P f$ which is never the case in $\lang{0}$.

  We show (i): if $M\in\lang{n}$ and $M \to N$, then $N\in\lang{n}$.
  Using induction, we do not need to consider cases where the rewrite
  step is inside a variable of the grammar.
  We write $\lang{n}$ in terms as shorthand for
  a term $M \in \lang{n}$.

  \begin{enumerate}
    \item [($\lang{0}$)]
    We have $\mylam{f}{\rightappiterate{f}{k}{(\omega_f\omega_f)}} 
    \to \mylam{f}{\rightappiterate{f}{k+1}{(\omega_f\omega_f)}} \in \lang{0}$.

    \item [($\lang{1}$)]
    We have $\eta\eta \to \mylam{f}{f(\eta\eta f)} \in \lang{1}$,\\
    and $\mylam{f}{\rightappiterate{f}{{k}}{(\mylam{f}{\rightappiterate{f}{{\ell}}{(\lang{1} f)}} f)}}
    \to \mylam{f}{\rightappiterate{f}{{k+\ell}}{(\lang{1} f)}} \in \lang{1}$.

    \item [($\lang{n}$)]
    Case 1:
    $\mylam{f}{\rightappiterate{f}{{k}}{(\lang{1} f)}} \delta 
    \to \rightappiterate{\delta}{{k}}{(\lang{1} \delta)} \in \lang{n}$
    for $n = 2$,
    and
    $(\mylam{b}{b^k(\lang{n-1}b)}) \delta \to \delta^k (\lang{n-1} \delta) \in \lang{n}$ 
    for $n > 2$.
    
    Case 2: $\delta \lang{n} \to \mylam{b}{b(\lang{n} b)} \in \lang{n}$.
    
    Case 3: 
    $\mylam{b}{b^k(\mylam{c}{c^\ell(\lang{n} c)} b)} \to \mylam{b}{b^{k+\ell}(\lang{n} b)} \in \lang{n}$.
  \end{enumerate}
\end{proof}

A very interesting theorem involving $\delta$ was proved by B.~Intrigila,
affirming a conjecture by R.~Statman. 

\begin{theorem}[Intrigila~\cite{intri:1997}]\label{thm:intrigila}
  There is no `double' fixed point combinator.
  That is, for no fpc $\afpc$ we have $\afpc\delta \conv \afpc$.
\end{theorem}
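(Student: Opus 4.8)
The plan is to argue by contradiction: suppose $\afpc$ is an fpc with $\afpc\delta \conv \afpc$. The guiding idea is that $\afpc\delta \conv \afpc$ forces an infinite regress in the amount of ``$\delta$-structure'' that a normal form of $\afpc x$ (or rather of its finite approximants) must contain, which is impossible. First I would note that every fpc $\afpc$ satisfies $\afpc x \conv x(\afpc x)$, so $\afpc$ has B\"{o}hm tree $\mylam{x}{x^\omega}$; the same holds for $\afpc\delta$. Since the B\"{o}hm tree alone cannot separate them, the argument must exploit finer reduction behaviour, and the natural tool is the infinitary $\lambda$-calculus $\inflamcal$ together with an analysis of head reductions.

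The key technical step is to pin down, for a hypothetical double fpc, the shape of the terms occurring along head reductions of $\afpc x$, much as in the proof of Theorem~\ref{thm:boehm:seq}: I would try to show that from $\afpc\delta\conv\afpc$ one can extract, for each $n$, a common reduct of $\leftappiterate{\afpc}{\delta}{n}$ and $\afpc$, and then track an invariant — the count of ``passive'' occurrences of $\delta$ (occurrences appearing as $P\delta$), exactly the invariant used in the \boehm{} sequence argument. The equation $\afpc\delta\conv\afpc$ would make this count both bounded (it equals whatever it is for $\afpc$) and unbounded (each extra $\delta$ on the right contributes one more passive $\delta$ that cannot be absorbed), a contradiction. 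The delicate point is that for an \emph{arbitrary} fpc $\afpc$ we have no grammar describing its reducts, so the passive-$\delta$ count is not obviously well-defined or monotone; one has to localize the argument to a single (in)finite reduction witnessing the conversion and use confluence of $\inflamcal$ on the relevant terms, or pass to B\"{o}hm-tree-like approximants and count occurrences of $\delta$ that survive to the approximant.

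I would then close the argument by a limiting/continuity consideration: $\afpc$ and $\afpc\delta$ have the same infinite normal form $\mylam{x}{x^\omega}$, but the ``$\delta$ added at the front'' can only disappear by being consumed in a head step, and each such consumption is traceable; iterating $\afpc\conv\afpc\delta\conv\afpc\delta\delta\conv\cdots$ produces a term that would need to consume infinitely many $\delta$'s ``at once'' to reach a whnf, contradicting that $\afpc$ itself reaches a whnf in finitely many head steps. The main obstacle, and the part I expect to require real work, is making the passive-$\delta$ counting rigorous without a grammar for the reducts of a general fpc — i.e.\ proving that the conversion $\afpc\delta\conv\afpc$ can be realized by reductions that keep the front $\delta$ ``visible'' long enough to derive the contradiction; this is presumably where Intrigila's original argument does its heavy lifting, and I would expect to need a careful standardization or head-reduction normalization lemma in $\inflamcal$ to control where the extra $\delta$ can go.
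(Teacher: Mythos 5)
The paper does not prove this theorem at all: it is imported as a black box from Intrigila's 1997 paper, which is a substantial piece of work in its own right. So the only question is whether your sketch could stand on its own, and it cannot: the step you yourself flag as ``where the heavy lifting happens'' is a genuine, unfilled gap, not a routine verification. The passive-$\delta$ count is a conversion invariant in the proof of Theorem~\ref{thm:boehm:seq} only because there one has an explicit grammar $\lang{n}$ for \emph{all} reducts of each $\fpc{n}$, verified to be closed under $\beta$-reduction; for an arbitrary fpc $\afpc$ there is no such grammar, and a reduct of $\afpc\delta$ may in principle consume, duplicate, or erase the appended $\delta$ in ways that destroy any naive count. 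Your proposal does not supply the standardization or tracing lemma that would control this, and neither the infinitary calculus $\inflamcal$ nor confluence (which fails infinitarily anyway, as the paper notes) hands it to you for free.

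There is also a structural warning sign that the argument as sketched proves too much. If the passive-$\delta$ count were ``both bounded and unbounded'' in the way you describe --- each appended $\delta$ contributing one unabsorbable passive occurrence --- the same reasoning would show $\leftappiterate{\afpc}{\delta}{n} \nconv \leftappiterate{\afpc}{\delta}{m}$ for all $n \ne m$ and \emph{every} fpc $\afpc$. But the paper explicitly states that freeness of the sequence $\afpc, \afpc\delta, \afpc\delta\delta, \ldots$ for arbitrary $\afpc$ is an open problem, with only the consecutive case known --- and known precisely \emph{via} Intrigila's theorem, not the other way around. So the invariant you are leaning on cannot be a $\beta$-conversion invariant in general, and the contradiction you aim for does not materialize without the detailed combinatorial analysis that constitutes Intrigila's actual proof.
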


\begin{remark}[Smullyan's Owl $SI \conv \delta \equiv \mylam{xy}{y(xy)}$]\label{rem:owl}\mbox{}\\
  We collect some salient facts and questions.
  \begin{enumerate}
    \item
      If $Y$ is an fpc, then $Y\delta$ is an fpc \cite{bohm:1963}.
    \item
      Let $Z$ be a wfpc. 
      Then both $\delta Z$ and $Z \delta$ are wfpc's \cite{smull:1985}.
    \item
      Call an applicative combination of $\delta$'s a \mbox{$\delta$-term}.
      In spite of $\delta$'s simplicity, not all \mbox{$\delta$-terms} 
      are strongly normalizing (SN). 
      An example of a \mbox{$\delta$-term} with infinite reduction 
      is $\delta\delta(\delta\delta)$ 
      (Johannes Waldman, Hans Zantema, personal communication, 2007).
    \item 
      Let $t$ be a non-trivial $\delta$-term, i.e., not a single $\delta$.
      Then $t$ is SN iff $t$ contains exactly one occurrence of $\delta\delta$.
      Furtermore, if $\delta$-terms $t,t'$ are SN, then they are convertible iff 
      $t,t'$ have the same length \cite{bare:klop:2009}.
    \item 
      Convertibility is decidable for $\delta$-terms \cite{stat:1989}.
    \item
      Call $\Delta = \delta^\omega$, so $\Delta \equiv \delta\Delta$.
      Then, the infinite $\lambda$-term $\Delta$ is an fpc:
      $\Delta x \equiv \delta\Delta x \mred x (\Delta x)$.
      $\Delta$ can be normalized again: $\Delta \to_{\omega} \mylam{f}{f^\omega}$.
      There are many more infinitary fpc's, e.g.\ for every $n$,
      the infinite term $\leftappiterate{(SS)^\omega}{S}{n} I$ is one,
      as will be clear from the sequel.
    \item 
      $\sbohm(\delta\delta(\delta\delta)) \equiv \sink$, $\delta\delta(\delta\delta)$ has no hnf.
      Its \ber{} tree is not trivial.
      Zantema remarked that $\delta$-terms, even infinite ones,
      such as $\Delta\Delta$, are ``top-terminating'' 
      (Zantema restricted himself to the applicative rule for $\delta$ only ---
      we expect that his observation remains valid for the \mbox{$\lambda\beta$-version}).
    \item 
      Is Intrigila's theorem also valid for wfpc's:
      for no wfpc $\awfpc$ we have $\awfpc \delta \conv \awfpc$?
  \end{enumerate}
\end{remark}

\section{The Scott Sequence}
In \cite{scott:1975} the equation $BY=BYS$ 
is mentioned as an interesting example of
an equation not provable in $\lamcal$, while easily provable
with Scott's Induction Rule.%
\footnote{This equation is also discussed in~\cite{deza:seve:vrie:2003}.}
Scott mentions that he expects that using `methods of \boehm' 
the non-convertibility in $\lamcal$ can be established, 
but that he did not attempt a proof. 
On the other hand, with the induction rule the equality is easily established. 
We will not consider Scott's Induction Rule, but we will be working 
in the infinitary lambda calculus, $\inflamcal$.
It is readily verified that in $\inflamcal$ we have:
\[
  B Y = B Y S = \mylam{ab}{(ab)^{\omega}}
\]

\begin{proposition}\label{prop:BY:neq:BYS}
  $B\fpcC \notconv B\fpcC S$
\end{proposition}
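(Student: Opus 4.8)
The plan is to adapt the invariant-set technique used in the proof of Theorem~\ref{thm:boehm:seq} to the pair $B\fpcC$ and $B\fpcC S$. Both terms have the same B\"{o}hm Tree, $\mylam{ab}{(ab)^\omega}$, so BT-equality cannot separate them; instead I would exhibit two disjoint sets $\lang{}, \lang{}'\subseteq\lterm$, each closed under $\mred$, with $B\fpcC\in\lang{}$ and $B\fpcC S\in\lang{}'$. Since $\beta$-convertible terms have a common reduct by confluence, disjointness of the two reduction closures immediately gives $B\fpcC\notconv B\fpcC S$. First I would unfold $B\fpcC$ a few head steps: $B\fpcC a b \to \fpcC(ab)\to \omega_{ab}\omega_{ab}\to (ab)(\omega_{ab}\omega_{ab})\to\cdots$, so along the head reduction each tick produces one more copy of the prefix $ab$ with exactly one head step in between, whereas $B\fpcC S a b \to \fpcC(aSb)\cdots$ — wait, more carefully $B\fpcC S\, a\, b \to \fpcC(a b)(S)\,$?? — I would recompute: $B\fpcC S = \mylam{ab}{\fpcC(a(Sb))}$? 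No: $B\equiv\mylam{xyz}{x(yz)}$, so $B\fpcC S \to \mylam{z}{\fpcC(Sz)}$, hence $B\fpcC S\, a = \mylam{b}{\fpcC(Sa)b}$? I would sort out this head reduction precisely; the point is that $B\fpcC S$ carries an extra residual redex $Sa$ (equivalently an extra $S$) that $B\fpcC$ does not, and this discrepancy should be trackable by a syntactic marker.

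Concretely, I would try to take for $\lang{}$ the least set of terms generated by a grammar whose clauses are the reducts of $B\fpcC$ — essentially $B\fpcC\mid \fpcC(\cxthole)\mid \mylam{b}{b^k(\lang{}b)}\mid\ldots$ with the relevant bookkeeping — and for $\lang{}'$ the analogous grammar for $B\fpcC S$, now threading the undischarged $S$-redex (or, after it fires, its contractum) through every clause. The key invariant distinguishing the two families is that every term in $\lang{}'$ either still contains an occurrence of $S$ (as an applied subterm that has not yet been consumed) or, once $S$ has done its work, contains a characteristic subterm shape that $\lang{0}$-style terms lack — analogous to the ``passive $\delta$'' count and the ``subterm $Pf$'' argument in Theorem~\ref{thm:boehm:seq}. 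Closure under $\to$ is then a finite case analysis over the grammar clauses exactly as in that proof: internal steps are handled by induction, and the head/root steps rewrite one clause to another in the same family. I would be careful to check the one step where $S$ is finally contracted, since that is where $\lang{}'$ changes shape, and to verify that no reduct of $B\fpcC$ ever acquires the distinguishing shape.

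The main obstacle is getting the grammar for $\lang{}'$ right: $B\fpcC$ and $B\fpcC S$ differ only by a single application, and a crude invariant like ``contains an $S$'' is destroyed the moment that $S$ is reduced, while ``$B\fpcC$ never reduces to something containing $S$'' is false in general since $\fpcC$ may duplicate contexts containing an $S$ fed from outside — but here the $S$ is internal and the term is closed, so I expect the correct invariant to be a statement about the {head-normal-form prefix}: the $n$-th approximant of $B\fpcC$ reaches $\mylam{ab}{(ab)^n\,R}$ with $R$ a reduct of $\fpcC(ab)$ taking a fixed number of head steps per layer, whereas for $B\fpcC S$ the first layer (or the tail $R$) is permanently ``off by one'' in head-step count or in the number of nested applications, because of the extra $S$. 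In other words I would reduce Proposition~\ref{prop:BY:neq:BYS} to a clocked-BT-style discrepancy already visible in the two reduction graphs, and phrase the invariant so that it is manifestly preserved by every $\beta$-step; pinning down that invariant precisely, and proving its preservation at the $S$-contraction step, is the crux.
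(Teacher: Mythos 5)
Your proposal is a plan rather than a proof: the one thing that would actually establish $B\fpcC \notconv B\fpcC S$ --- the concrete pair of disjoint, reduction-closed sets $\mcl{L}, \mcl{L}'$ and the syntactic feature separating them --- is never exhibited. You say yourself that ``pinning down that invariant precisely, and proving its preservation at the $S$-contraction step, is the crux,'' and you stop there. The difficulty is real, not just cosmetic: a reduct of $B\fpcC S$ such as $\mylam{ab}{ab(\omega_{Sa}\omega_{Sa}b)}$ (obtained after the $S$ at the head has fired) is structurally very close to the corresponding reduct $\mylam{ab}{ab((ab)(\omega_{ab}\omega_{ab}))}$ of $B\fpcC$, and neither ``contains an $S$'' nor a na\"{\i}ve count of applications survives; moreover your own computation of the head reduction of $B\fpcC S\,a\,b$ goes astray midway (``wait, more carefully \ldots''), so even the raw data the invariant is supposed to capture is not yet on the table. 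As it stands the argument cannot be checked, because there is nothing to check.

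The paper avoids all of this with a two-line reduction to an already-established fact: if $B\fpcC \conv B\fpcC S$ then postfixing $I$ gives $B\fpcC I \conv B\fpcC S I$, but $B\fpcC I \conv \fpcC$ and $B\fpcC S I \conv \fpcC(SI) \equiv \fpcT$, contradicting Proposition~\ref{prop:Y0neqY1} ($\fpcC \notconv \fpcT$). Note that Proposition~\ref{prop:Y0neqY1} is itself proved by exactly the kind of invariant you are trying to build --- but for the fpc's $\fpcC$ and $\fpcT$, whose reduct sets ($\lang{0}$ and $\lang{1}$ in Theorem~\ref{thm:boehm:seq}) are far more tractable than those of the pre-fpc's $B\fpcC$ and $B\fpcC S$. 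So the instinct to use an invariant is sound, but it is deployed one level down, where it is easy; the step you are missing is the observation that $\conv$ is a congruence, so applying both sides to a well-chosen argument ($I$, suggested by the B\"{o}hm tree $\mylam{ab}{(ab)^\omega}$) transports the problem to that easier level. If you do want to carry out your direct construction, you would essentially be redoing the $\lang{0}$ versus $\lang{1}$ separation inside a context, with no gain.
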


\begin{proof}
  Postfixing the combinator $I$ yields $B\fpcC I$ and $B\fpcC SI$.
  Now $B \fpcC I \conv \fpcC$ and $B \fpcC SI \conv \fpcC(SI) = \fpcT$.
  Because $\fpcC \neq \fpcT$ (Proposition~\ref{prop:Y0neqY1}), 
  the result follows.
  \end{proof}
\noindent
In the same breath we can strengthen this non-equation to all fpc's $Y$, 
by the same calculation followed by an application of Theorem~\ref{thm:intrigila} 
stating that for no fpc $Y$ we have $Y \conv Y\delta \conv Y(SI)$.

\begin{remark}
  \hfill
  \begin{enumerate}
  \item
	The idea of postfixing an $I$ is suggested by the BT $\mylam{ab}{(ab)^{\omega}}$
	of $BY$ and $BYS$. Namely, in $\inflamcal$ we calculate:
	$(\mylam{ab}{(ab)^{\omega}}) I = \mylam{b}{(I b)^{\omega}} = \mylam{b}{b^{\omega}}$
	which is the BT of any fpc.
  \item
    Interestingly, Scott's equation $B Y = B YS$
    implies the equation of Statman and Intrigila,
    $Y = Y \delta$
    as one readily verifies, as in the proof of Proposition~\ref{prop:BY:neq:BYS}.
  \end{enumerate}
\end{remark}

Actually, the comparison between the terms $BY$ and $BYS$ has more
in store for us than just providing an example that the extension
from finitary lambda calculus $\lambda\beta$ to infinitary lambda
calculus $\inflamcal$ is not conservative. 
The BT-equality of $BY$ and $BYS$ suggests looking 
at the whole sequence
$BY$, $BYS$, $BYSS$, $BYSSS, \ldots, \leftappiterate{BY}{S}{n},\ldots$.
By the congruence property of BT-equality,
they all have the same BT $\lambda ab.(ab)^{\omega}$; 
so the terms in this sequence are not fpc's. 
But they are close to being fpc's,
for the first two terms in the sequence we already saw above that
postfixing an $I$ turns them into fpc's $\fpcC,\fpcT$. 
How about postfixing an $I$ to all the terms in the sequence, 
yielding
\[
BYI, BYSI,B YSSI, BYSSSI,\ldots,\leftappiterate{BY}{S}{n} I, \ldots\]
to which we will refer as the \emph{Scott sequence}.
Surprisingly, all these terms are fpc's.
The Scott sequence concurs with
the \boehm{} sequence of fpc's only for the first two elements, and then
splits off with different fpc's. 
But there is a second surprise. 
In showing that $\leftappiterate{BY}{S}{n}I$ is an fpc,
we find as a bonus the fpc-generating vector 
$\leftappiterate{\cxthole(SS)}{S}{n}I$
(which does preserve reducingness).
We collect the result.
\begin{theorem}\label{thm:scott:sequence}
  Let $Y$ be a $k$-reducing fpc. Then:
  \begin{enumerate}
	\item
      $\leftappiterate{BY}{S}{n}I$ is a (non-reducing) fpc,
	  for all $n \geq 0$\,; 
	\item
	  $\leftappiterate{Y(SS)}{S}{n}I$ is a $(k+3n+7)$-reducing fpc,
	  for all $n \geq 0$.
  \end{enumerate}
\end{theorem}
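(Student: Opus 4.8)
The plan is to prove both statements by the same strategy used for the \boehm{} sequence: exhibit an explicit infinitary calculation showing the defining fpc-equation holds, and track the number of head steps to get the reducing constants. First I would work in $\inflamcal$ to compute the \boehm{} tree. Since $Y$ is an fpc, $Yx \mred x(Yx)$ and hence $Y$ has BT $\mylam{f}{f^\omega}$; so $Y(SS)$ has BT $\mylam{f}{f^\omega}$ as well (applying an fpc to anything in the BT sense reproduces $\mylam{f}{f^\omega}$, because $\mathsf{BT}$ is a congruence and $(\mylam{f}{f^\omega})M = M^\omega \mred \mylam{f}{f^\omega}$ only when $M$ itself has the right shape — more carefully, I would just directly compute $Y(SS)x \mred SS(Y(SS)x) \mred \ldots$). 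The key elementary facts I would isolate first are the reduction behaviours of $S$ and $B$ on the relevant arguments: $Bxyz \redn{1} x(yz)$, $SSxy \redn{2} Sy(xy) \redn{1} \ldots$, and crucially that $SSS \redn{?} S(SS)$-type rearrangements let a trailing $I$ collapse the whole vector. The heart is a lemma of the form: for any term $P$ with $Px \mred x(Px)$ (an fpc up to reduction), $\leftappiterate{(SS)}{S}{n}$ applied appropriately to $P$ and then to $I$ again yields something reducing to $x(\cdots x)$.

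Concretely, for (ii) I would first handle $n=0$: show $Y(SS)I$ is a $(k+7)$-reducing fpc by direct calculation, $Y(SS)Ix \mred (SS)(Y(SS)I)x \mred \ldots \mred x(Y(SS)Ix)$, carefully counting: $k$ steps to get $Y(SS) \mred (SS)(Y(SS))$ wait — more precisely $Y M x \redn{k} M(YM)x$ since $Y$ is $k$-reducing, then a fixed number of $S$- and $I$-contractions to reach head normal form $x(\ldots)$; the bookkeeping should land on $7$ extra steps. Then for the inductive step $n \to n+1$, observe $\leftappiterate{Y(SS)}{S}{n+1}I = (\leftappiterate{Y(SS)}{S}{n}S)SI$ — no, rather I would set $W_n \defeq \leftappiterate{Y(SS)}{S}{n}$, show $W_{n+1}I = W_n(SI) \mred W_n \delta'$ for a suitable combinator, or more cleanly show each extra $S$ contributes exactly $3$ head steps via an identity like $\leftappiterate{A}{S}{n+1}I \redn{3} \leftappiterate{A}{S}{n}(\text{something})I$. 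The precise routing of the trailing $I$ past the stack of $S$'s is the calculation to get right; I expect $3n$ to emerge as "$3$ steps to strip each $S$". Statement (i) is the easier companion: $BY$ is not reducing, and $\leftappiterate{BY}{S}{n}I$ being an fpc follows because $BYSI \conv Y(SI) = Y\delta$ is an fpc (Böhm), and more generally $\leftappiterate{BY}{S}{n}I$ reduces to an fpc of the form $Y(\text{word in }S,I)$; I would identify that word and check it is (convertible to) an fpc, using $SI = \delta$ and the fact that an fpc composed suitably stays an fpc. Actually the cleanest line for (i): $\leftappiterate{BY}{S}{n}I \conv Y(\leftappiterate{S}{S}{n-1} \cdots I)$-type reassociation — $BY$ just prepends $Y$ to the composite of its arguments — so $\leftappiterate{BY}{S}{n}I$ is $Y$ applied to the composition $S \circ \cdots \circ S \circ I$ read off from $B$, and one checks this composite, applied to $x$, yields $x(\ldots)$, making the whole thing an fpc; no reducingness is claimed because $BY$ absorbs the count non-uniformly.

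The main obstacle will be the exact step-counting for part (ii), i.e.\ verifying the constant $7$ and the linear coefficient $3$: one must pin down head reduction precisely (not arbitrary reduction), track that no head redex is skipped, and confirm that the $k$-reducingness of $Y$ passes through the $SS$ and $S^{\sim n}$ and $I$ wrappers additively with exactly those overheads. A secondary subtlety is making the $\inflamcal$ argument for "it is an fpc" rigorous rather than hand-wavy: since $\inflamcal$ is not $\CRinf$, I would avoid leaning on infinite normal forms for the \emph{equational} claim and instead give the finite reduction $\leftappiterate{Y(SS)}{S}{n}Ix \mred x(\leftappiterate{Y(SS)}{S}{n}Ix)$ outright (finite because $Y$ is $k$-reducing), which suffices for $\conv$. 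The infinitary picture is then only used as a sanity check / source of the construction, exactly as the paper uses it elsewhere.
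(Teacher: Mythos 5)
Your calculation for part (ii) is essentially the paper's: the paper proves it by the single head reduction $\leftappiterate{Y(SS)}{S}{n}Ix \redn{k} \leftappiterate{SS(Y(SS))}{S}{n}Ix \hredn{3n} SS(\leftappiterate{Y(SS)}{S}{n})Ix \hredn{3} SI(\leftappiterate{Y(SS)}{S}{n}I)x \hredn{3} Ix(\ldots x) \hredn{1} x(\leftappiterate{Y(SS)}{S}{n}Ix)$, where the $3n$ comes from $SSPQ \hredn{3} SQ(PQ)$ absorbing one trailing $S$ at a time; your ``$7$ extra plus $3$ per $S$'' bookkeeping, done by induction instead of in one shot, is the same argument and would go through. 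The problems are in your handling of application associativity. First, $W_{n+1}I = W_n(SI)$ is false: application associates to the left, so $W_{n+1}I \equiv (W_nS)I$, and no $\delta = SI$ appears. Second, and more seriously, your ``cleanest line for (i)'' rests on the claim that $\leftappiterate{BY}{S}{n}I$ is $Y$ applied to the composite $S\circ\cdots\circ S\circ I$. This is wrong for $n\ge 3$: $B$ composes only its first two arguments, and everything after the third is merely appended, so $\leftappiterate{BY}{S}{n}I \mred \leftappiterate{Y(SS)}{S}{(n-2)}I$ --- a term with $Y$ at the head of a \emph{vector} of arguments $(SS),S,\ldots,S,I$ --- and not $Y(S(S(\cdots(SI))))$, which is a different term with no reason to be an fpc-reduct of the required shape. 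The route that works (and is the paper's) is exactly this reduction: contract the $B$-redex to land in the family of part (ii), which shows $\leftappiterate{BY}{S}{n}Ix \mred \leftappiterate{Y(SS)}{S}{(n-2)}Ix \mred x(\leftappiterate{Y(SS)}{S}{(n-2)}Ix) \conv x(\leftappiterate{BY}{S}{n}Ix)$ for $n\ge 2$, with the cases $n=0,1$ handled separately via $BYI \conv Y$ and $BYSI \conv Y\delta$. Your instinct that (i) should reduce to (ii) is right; the composition reading would have derailed it. (Your decision to give the finite reduction outright and use $\inflamcal$ only as a heuristic is also exactly what the paper does.)
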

%

The proof of Theorem~\ref{thm:scott:sequence} is easy:
see the next example.

\begin{example}
  Let $Y$ be a $k$-reducing fpc.
  Then:
  \begin{align*}
    B Y S S S I x
    & \mhred Y (S S) S I x 
    \hredn{k} S S (Y (S S)) S I x
    \\
    & \hredn{3} S S (Y (S S) S) I x
    \hredn{3} S I (Y (S S) S I) x
    \\
    & \hredn{3} I x (Y (S S) S I x)
    \hredn{1} x (Y (S S) S I x)
  \end{align*}
  This shows that $\leftappiterate{B Y}{S}{3} I$ is a non-reducing fpc,
  and at the same time that $Y (S S) S I x$ is reducing.
\end{example}

\begin{remark}\label{rem:generating:vectors:n=2}
  Another `fpc-generating vector' is obtained as follows. 
  Start with the equation $Mab = ab(Mab)$; solutions all have the BT seen above,
  $\lambda ab.(ab)^{\omega}$. 
  For every $M$ satisfying this equation, we have that $MI$ is an fpc. 
  For: $MIx = Ix(MIx) = x(MIx)$. 
  Now we can solve the equation in different ways. 
  \begin{enumerate}
  \item
	$Mab = Y(ab)$, so $M=\lambda ab.Y(ab) = (\lambda yab.y(ab))Y = BY$, 
	as found before.
  \item
	$Mab = ab(Mab) = Sa(Ma)b$, which is obtained by solving
	$Ma = Sa(Ma)$, leading to $Ma = Y(Sa)=BYSa$, so $M = BYS$. 
	Also this solution was considered before. 
  \item
	$M = \lambda ab.ab(Mab) = (\lambda mab.ab(mab))M$,
	yielding $M = Y\varepsilon$ with $\varepsilon = \lambda abc.bc(abc)$. 
	So if $Y$ is an fpc, then $Y\varepsilon I$ is again an fpc.
  \end{enumerate}
\end{remark}

\section{Generalized Generation Schemes}
The schemes mentioned in 
Theorem~\ref{thm:scott:sequence} and
Remark~\ref{rem:generating:vectors:n=2}(iii) 
for generating new fixed points from old, 
are by no means the only ones. 
There are in fact infinitely many of such schemes. 
They can be obtained analogously to the ones that we extracted
above from the equation $BY = BYS = \lambda ab.(ab)^{\omega}$, 
or the equation $Mab = ab(Mab)$. 
We only treat the case for $n=3$: 
consider the equation $Nabc = abc(Nabc)$. 
Then every solution $N$ is again a `pre-fpc', 
namely $NII$ is a fpc: $NIIx \conv IIx(NIIx) \conv x(NIIx)$. 
\begin{enumerate}
  \item
  $Nabc = Y(abc)$,
  which yields $N = (\lambda yabc.y(abc)))Y = (\lambda yabc.BBByabc)Y$.
  We obtain $N = BBBY$. 
  \item 
  $N = Y \xi$ with $\xi = \lambda nabc.abc(nabc)$, 
  yielding the fpc-generating vector $\cxthole \xi II$. 
  \item $Nabc = abc(Nabc) = S(ab)(Nab)c$.
  So we take $Nab = S(ab)(Nab)$, which yields $Nab = Y(S(ab)) = BBBY(BS)ab$.
  So $N = BBBY(BS)$, and thus we find the equation $BBBY = BBBY(BS)$,
  in analogy with the equation $BY \conv BYS$ above. 
\end{enumerate}
Also this equation spawns lots of fpc's as well as fpc-generating vectors. 
Let's abbreviate $BS$ by $A$. 
First one forms the 
sequence 
\begin{align*}
  BBBY,\; BBBYA,\; BBBYAA,\; BBBYAAA,\ldots
\end{align*}
These terms all have the BT $\lambda abc.abc(abc)^{\omega}$. 
They are not yet fpc's , but only `pre-fpc's'. 
But after postfixing this time $\ldots II$ we do again obtain a sequence of fpc's: 
\begin{align*}
  BBBYII,\; BBBYAII,\; BBBYAAII, \ldots
\end{align*}
Again the first two coincide with $\fpcC,\fpcT$, 
but the the series deviates not only from the \boehm{} sequence 
but also from the Scott sequence above. 
As above, the proof that a term in this sequence is indeed a fpc, 
yields a fpc-generating vector. 
Thus we find e.g.\ the following new fpc-generating schemes,
which we render in a self-explaining notation:
\begin{enumerate}
  \item $Y \Rightarrow Y(S(AI)I$
  \item $Y \Rightarrow Y(AAA)II$
  \item $Y \Rightarrow Y(AII)$ \label{item:scheme:3}
  \item $Y \Rightarrow Y(AAI)I$
  \item $Y \Rightarrow \leftappiterate{Y(AAA)}{A}{n}II$
\end{enumerate}
(Note: scheme~\eqref{item:scheme:3} came up out of the general search; 
one may recognize that it is not a new scheme, because the term $AII$ 
is actually the Owl $\delta$). 
We can derive many more of these schemes by proceeding
with solving the general equation $Na_{1}a_{2}...a_{n}=a_{1}a_{2}...a_{n}(Na_{1}a_{2}...a_{n})$,
bearing in mind the following proposition.
\begin{proposition}
  If $N$ is a term satisfying
  \[N a_1 a_2 \ldots a_n = a_1 a_2 \ldots a_n (N a_1 a_2 \ldots a_n)\]
  then $\leftappiterate{N}{I}{(n-1)}$ is an fpc.
\end{proposition}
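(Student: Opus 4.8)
The plan is to verify directly that $M \equiv \leftappiterate{N}{I}{(n-1)}$ enjoys the defining property of an fpc, $M x \conv x (M x)$, simply by instantiating the hypothesis with $a_1 = \dots = a_{n-1} = I$ and keeping $a_n$ in the role of the ``$x$''. Unfolding the vector notation, $M x \equiv N\, I \cdots I\, x$ with $n-1$ occurrences of $I$, which is exactly the left-hand side $N a_1 \cdots a_n$ of the assumed equation under that substitution. Since the equation generates a congruence (equally so whether it is read in $\lambda\beta$ or in $\inflamcal$, so that substituting $I$ for the $a_i$ is harmless), I obtain
\[
  M x \;=\; N\, I \cdots I\, x \;=\; (I \cdots I\, x)\,(N\, I \cdots I\, x) \;=\; (\leftappiterate{I}{I}{(n-2)}\, x)\,(M x)
\]
for $n \ge 2$.

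Next I would simplify the head factor $\leftappiterate{I}{I}{(n-2)}\, x$. Because application associates to the left and $I I \to I$, a one-line induction on $k$ gives $\leftappiterate{I}{I}{k} \mred I$ for every $k \ge 0$; hence $\leftappiterate{I}{I}{(n-2)}\, x \mred I x \to x$. Substituting this back into the chain above yields $M x \conv x (M x)$, so $M$ is an fpc. The case $n = 1$ is immediate: then $\leftappiterate{N}{I}{(n-1)}$ is just $N$, and the hypothesis already reads $N a_1 = a_1 (N a_1)$, i.e.\ $N$ is an fpc; so the statement tacitly assumes $n \ge 1$ in order for the vector $\leftappiterate{N}{I}{(n-1)}$ to be well-formed.

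I do not expect a genuine obstacle here: the argument is a single instantiation of the hypothesis followed by the collapse of a block of $I$'s to $I$. The only point that deserves a little care is precisely that collapse --- the block $I \cdots I$ ($n-1$ copies) must be understood as the left-associated term $\leftappiterate{I}{I}{(n-2)}$ rather than as an opaque ``$I^{n-1}$'', and since it reduces to $I$ only when it is non-empty, one really does have to peel off the case $n = 1$ as above.
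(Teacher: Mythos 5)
Your proof is correct and follows exactly the route the paper itself takes (it gives no separate proof of the proposition, but the preceding $n=2$ and $n=3$ cases, e.g.\ $NIIx \conv IIx(NIIx) \conv x(NIIx)$, are precisely your instantiation $a_1=\dots=a_{n-1}=I$, $a_n=x$ followed by collapsing the block of $I$'s). Your explicit treatment of the left-associated $I$-block and of the degenerate case $n=1$ is a harmless extra bit of care.
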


We finally mention an fpc-generating scheme with `dummy parameters'.
\begin{enumerate}
  \item[(vi)]
  $Y \Rightarrow Y Q P_{1} \ldots P_{n}$ 
  where $P_{1},\ldots,P_{n}$ are arbitrary (dummy) terms,
  and 
  $Q = \mylam{yp_{1}...p_{n}x}{x(y p_1\ldots p_n x)}$.
\end{enumerate}

\section{Clock Behaviour of Lambda Terms}\label{sec:clocked}

As we have seen, there is vast space of fpc's
and there are many ways to derive new fpc's.
The question is whether all these fpc's are indeed new.
So we have to prove that they are not $\beta$-convertible.

For the \bohm{} sequence we did this by an ad hoc argument
based on a syntactic invariant; 
and this method works fine to establish lots of non-equations
between the alleged `new' fpc's that we constructed above.
Still, the question remains whether there are 
not more `strategic' ways of proving such inequalities. 

In this section we propose a more strategic way 
to discriminate terms with respect to $\beta$-conversion.
The idea is to extract from a $\lambda$-term more than just its $\sbohm$,
but also how the $\sbohm$ was formed; 
one could say, in what tempo, or in what rhythm.
A $\sbohm$ is formed from static pieces of information,
but these are rendered in a clock-wise fashion,
where the ticks of the internal clock are head reduction steps.

In the sequel we write $\annotate{k}{t}$ for
the term $t$ where the root is \emph{annotated with $k\in\nat$}.
Here, term formation binds stronger than annotation $\annotate{k}{}$.
For example $\annotate{k}{M N}$ stands for the term $\annotate{k}{(M N)}$
(that is, annotating the (non-displayed) application symbol in-between $M$ and $N$,
in contrast to $(\annotate{k}{M}) N$).
Moreover, for an annotated term $t$ we use $\deannotate{t}$
to denote the term obtaind from $t$ by dropping all annotations (including annotations of substerms).

\begin{definition}[Clocked \bohm{} trees]\label{def:cbohm}
  Let $t$ be a $\lambda$-term.
  The \emph{clocked \bohm{} tree $\cbohm{t}$ of $t$}
  is an annotated potentially infinite term defined as follows.
  If $t$ has no hnf, then define $\cbohm{t}$ as $\sink$.
  Otherwise,
  there is a head reduction $t \hredn{k} \mylam{x_1}{\ldots\mylam{x_n}{y M_1 \ldots M_m}}$ to hnf.
  Then we define 
  $\cbohm{t}$ as the term $\annotate{k}{\mylam{x_1}{\ldots\mylam{x_n}{y \cbohm{M_1} \ldots \cbohm{M_m}}}}$.
\end{definition}
\noindent
The (non-clocked) \boehm{} tree of a $\lambda$-term $M$
can be obtained by dropping the annotations:
$\sbohm(M) \defeq \deannotate{\cbohm{M}}$.


\begin{figure}[ht!]
  \begin{center}
  \begin{tikzpicture}[level distance=7mm,inner sep=1mm]
      \node  {$\treeap$} \annotatednode{$\treeap$}{2}
        child { node {$f$} }
        child { node {$\treeap$} \annotatednode{$\treeap$}{1}
          child { node {$f$} }
          child { node {$\treeap$} \annotatednode{$\treeap$}{1}
            child { node {$f$} }
            child { node {$\ddots$} \annotatednode{$\treeap$}{1}
            }
          }
        };
  \end{tikzpicture}
  \begin{tikzpicture}[level distance=7mm,inner sep=1mm]
      \node {$\treeap$} \annotatednode{$\treeap$}{2}
        child { node {$f$} }
        child { node {$\treeap$} \annotatednode{$\treeap$}{2}
          child { node {$f$} }
          child { node {$\treeap$} \annotatednode{$\treeap$}{2}
            child { node {$f$} }
            child { node {$\ddots$} \annotatednode{$\treeap$}{2}
            }
          }
        };
  \end{tikzpicture}
  \caption{\mbox{Clocked \bohm{} trees of $\boldsymbol{\fpcC f}$ and $\boldsymbol{\fpcT f}$.}}
  \vspace{-4ex}
  \label{fig:boem:y0:y1}
  \end{center}
\end{figure}
Let us consider the fpc's $\fpcC$ of Curry and $\fpcT$ of Turing.
We have $\fpcC \equiv \mylam{f}{\omega_f\omega_f}$ 
where $\omega_f \equiv \mylam{x}{fxx}$, and
\begin{align*}
  \omega_f\omega_f \hredn{1} f (\omega_f\omega_f)
\end{align*}
Therefore we obtain 
\begin{align*}
  \cbohm{\fpcC f} = \annotate{2}{f \cbohm{\omega_f \omega_f}} 
  \quad\text{and}\quad
  \cbohm{\omega_f \omega_f} = \annotate{1}{f \cbohm{\omega_f \omega_f}}\punc.	
\end{align*}

For $\fpcT \equiv \eta \eta$ where $\eta \equiv \mylam{x}{\mylam{f}{f (xxf)}}$ we get:
\begin{align*}
  \fpcT f \equiv \eta \eta f \hredn{2} f (\eta \eta f)
\end{align*}
Hence, $\cbohm{\fpcT f} = \annotate{2}{f \cbohm{\fpcT f}}$.
Figure~\ref{fig:boem:y0:y1} displays the 
clocked \bohm{} trees of $\fpcC f$ (left) and $\fpcT f$ (right).

The following definition captures the well-known \boehm{} equality
of $\lambda$-terms.
\begin{definition}
  $\lambda$-terms $M$ and $N$ are \emph{$\sbohm$-equal},
  denoted $M \treeequal{\sbohm} N$,
  if $\sbohm(M) \equiv \sbohm(N)$.
\end{definition}

If $M$ and $N$ are not $\sbohm$-equal then $M \notconv N$.
More generally, if for some $F$, 
$\sbohm(M F) \not\equiv \sbohm(N F)$ then $M \notconv N$.
This method is know as \emph{\bohm-out technique}~\cite{bare:1984}.

Below, we refine this approach by comparing
the clocked \bohm{} trees $\cbohm{M}$ and $\cbohm{N}$
instead of the ordinary (non-clocked) \boehm{} trees.
In general, $\cbohm{M} \not\equiv \cbohm{N}$
does not always imply that $M \notconv N$.
Nevertheless, for a large class of $\lambda$-terms, called `simple' below,
this implication will turn out to be true.

We lift relations over natural numbers to relations over clocked \boehm{} trees.

\newcommand{\scbt}{T}
\newcommand{\cbt}{\sub{\scbt}}
\newcommand{\acbt}{\cbt{1}}
\newcommand{\bcbt}{\cbt{2}}

\begin{definition}
  Let $\acbt$ and $\bcbt$ be clocked \boehm{} trees
  with $\pos{\acbt} = \pos{\bcbt}$,
  ${\mathrel{R}} \subseteq \nat \times \nat$,
  and $\apos \in \pos{\acbt}$.

  We use $\acbt \relat{\mathrel{R}}{\apos} \bcbt$ to denote that either
  both $\subtrmat{\acbt}{\apos}$ and $\subtrmat{\bcbt}{\apos}$ are not annotated,
  or 
  both are annotated, and
  $\subtrmat{\acbt}{\apos} \equiv \annotate{k_1}{\acbt'}$
  and $\subtrmat{\bcbt}{\apos} \equiv \annotate{k_2}{\bcbt'}$
  with $k_1 \mathrel{R} k_2$.
  If $\acbt \relat{\mathrel{R}}{\apos} \bcbt$ for every $\apos \in \pos{\acbt}$, 
  we write $\acbt \mathrel{R} \bcbt$.

  
  We write $\acbt \relev{\mathrel{R}} \bcbt$, and say \emph{$\mathrel{R}$ holds eventually},
  if there exists a depth level $\ell \in \nat$
  such that  $\acbt \relat{\mathrel{R}}{\apos} \bcbt$ for all positions $\apos \in \pos{\acbt}$ with $\length{p} \ge \ell$.
\end{definition}

Next, we lift relations over clocked \bohm{} trees to $\lambda$-terms.
\begin{definition}
  Let $M$, $N$ be $\lambda$-terms, and ${\mathrel{R}} \subseteq \nat \times \nat$.

  We write $M \crel{\mathrel{R}} N$ whenever $M \treeequal{\sbohm} N$,
  and we have that $\cbohm{M} \mathrel{R} \cbohm{N}$.
  
  We write $M \creli{\mathrel{R}} N$ if $M \treeequal{\sbohm} N$,
  and for infinitely many $\apos \in \pos{\cbohm{M}}$ we have
  $\cbohm{M} \relat{\mathrel{R}}{\apos} \cbohm{N}$.
\end{definition}

%

In case of $M \crel{\le} N$ ($M \crel{\ge} N$) we say that 
$M$ has a \emph{faster} (\emph{slower}) clock than $N$.

\begin{proposition}\label{prop:clocks}
  Clocks are accelerated under reduction, 
  that is, ${\mred} \subseteq {\crel{\ge}}$, and slowing down under expansion.
\end{proposition}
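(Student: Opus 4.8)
The two clauses are one fact read in opposite directions: ``slowing down under expansion'' says that $N \mred M$ implies $M \crel{\le} N$, i.e.\ $N \crel{\ge} M$, so it suffices to prove ${\mred} \subseteq {\crel{\ge}}$. The plan is to trace a reduction $M \mred N$ through the successive head-reduction phases out of which the clocked \boehm{} tree is built, checking that at every node the number of head steps can only have decreased.

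Write $h(L)$ for the length of the head reduction of $L$ to head normal form, when one exists. The technical core is a \emph{matching lemma}: if $M$ has an hnf and $M \mred N$, then $N$ has an hnf, $h(N) \le h(M)$, and --- writing the two hnf's as $\mylam{x_1 \ldots x_n}{z\, P_1 \ldots P_m}$ and $\mylam{x_1 \ldots x_n}{z\, Q_1 \ldots Q_m}$, the agreement of binders, head variable and arity being just invariance of the \boehm{} tree under reduction --- one has $P_i \mred Q_i$ for each $i$. I would derive this from standardization~\cite{bare:1984}: a reduction $M \mred N$ may be rearranged so that its head steps form a prefix, namely an initial segment of the (deterministic) head reduction of $M$, after which no head redex is contracted again. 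Since no redex can properly contain a head redex, the head redex survives the remaining, non-head, steps as its unique residual in head position; an induction on $h(M)$, commuting head steps through the non-head phase one at a time, then reduces matters to the case where the head prefix already reaches the hnf of $M$, in which the remaining steps rewrite only inside the arguments $P_1, \ldots, P_m$. This gives $h(N) \le h(M)$ and the stated relation between the argument tuples. That reduction cannot be arranged to \emph{increase} a head-step count, and that the arguments are linked by $\mred$ in this direction rather than merely by a common reduct, is the only genuinely non-routine point, and it is exactly here that standardization is essential.

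Granting the matching lemma, the proposition follows by coinduction on the structure of the clocked \boehm{} tree. If $M$ has no hnf then neither has $N$ (a reduct of a term with an hnf has one, by confluence and stability of hnf's), so $\cbohm{M} = \sink = \cbohm{N}$ and there is nothing to check. If $M$ has an hnf, so has $N$; the roots of $\cbohm{M}$ and $\cbohm{N}$ carry the annotations $h(M)$ and $h(N)$ with $h(M) \ge h(N)$, below these roots the trees read $\mylam{x_1 \ldots x_n}{z\, \cbohm{P_1} \ldots \cbohm{P_m}}$ and $\mylam{x_1 \ldots x_n}{z\, \cbohm{Q_1} \ldots \cbohm{Q_m}}$ with identical binders and head variable, and $P_i \mred Q_i$, so by the coinduction hypothesis $\cbohm{P_i} \ge \cbohm{Q_i}$ at all positions. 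Hence $\pos{\cbohm{M}} = \pos{\cbohm{N}}$ and $M \treeequal{\sbohm} N$, and the relation $\ge$ holds at every annotated position; that is, $M \crel{\ge} N$.
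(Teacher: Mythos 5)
Your argument is correct and rests on the same underlying fact as the paper's proof --- a head redex has exactly one residual under a co-initial non-head step (and at most one under an arbitrary step), so head reductions can only shorten under projection --- but you package it differently. The paper decomposes ${\mred}$ into parallel steps ${\dred^*}$ and, for a single step $M \dred N$, projects it over the head reduction $M \hredn{k} H$ by stacking $k$ elementary diagrams (in each of which the head step is preserved or erased, never duplicated), obtaining $N \hredn{\ell} H'$ with $\ell \le k$ and $H \dred H'$, then corecurses on the arguments of $H$. You instead invoke standardization to factor $M \mred N$ into a head prefix followed by an internal reduction and induct on $h(M)$; the two are mirror images of the same diagram. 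The paper's route is more self-contained (only orthogonal projection is used, and the overlap/erasure case is absorbed into $\hredeq$), while yours makes the matching lemma with $P_i \mred Q_i$ explicit, which the paper leaves implicit in $H \dred H'$. One step of your sketch needs a little more care than you give it: when you ``commute head steps through the non-head phase one at a time,'' projecting the internal tail over a head step can duplicate internal redexes and destroy internality, so you must either re-factorize after each commutation or state the induction for arbitrary $\mred$ on the measure $h(M)$, reapplying standardization at each stage. This is routine, but it is precisely the duplication phenomenon that makes a naive ``the two phases simply commute'' reading false, so it should be spelled out.
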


\begin{proof}
  We proceed by an elementary diagram construction.
  Whenever we have co-initial steps $M \hred M_1$ and $M \dred M_2$,
  then by orthogonal projection~\cite{terese:2003}
  there exist joining steps $M_1 \dred M'$ and $M_2 \hredeq M'$.
  Note that the head step $M \hred M_1$
  cannot be duplicated, only erased in case of an overlap.
  This leads to the elementary diagram displayed in Figure~\ref{fig:elementary:diagram}.
  \begin{figure}[ht!]
  \begin{center}
  \begin{tikzpicture}[thick,node distance=17mm]
    \node (M) {$M$};
    \node (M1) [right of=M] {$M_1$};
    \node (M2) [below of=M] {$M_2$};
    \node (M') [below of=M1] {$M'$};
    \draw [->,shorten >= 1mm] (M) -- (M1) node [pos=.95,below] {$h$};
    \draw [->] (M) -- (M2); \fill [fill=white,draw=black] ($(M)!.5!(M2)$) circle (1mm);
    \draw [->,shorten >= 2mm] (M2) -- (M') node [pos=.9,below] {$h$} node [pos=.9,above] {$\equiv$};
    \draw [->] (M1) -- (M'); \fill [fill=white,draw=black] ($(M1)!.5!(M')$) circle (1mm);
  \end{tikzpicture}
  \vspace{-2ex}
  \caption{Elementary diagram.}
  \vspace{-2ex}
  \label{fig:elementary:diagram}
  \end{center}
  \end{figure}
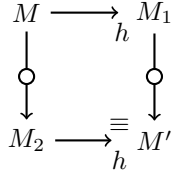

  We have ${\mred} \subseteq {\dred^*}$.
  By induction on the length of the rewrite sequence $\dred^*$
  it suffices to show that $M \dred N$ implies $M \ge N$.
  Let $M \dred N$.
  If $M$ has no hnf, then the same holds for $N$, and hence $\cbohm{M} = \bot = \cbohm{N}$.
  Therefore assume that there exists a head rewrite sequence 
  $M \hredn{k} H \equiv \mylam{x_1}{\ldots\mylam{x_n}{y M_1 \ldots M_m}}$ to hnf.
  We have \[\cbohm{M} \equiv \annotate{k}{\mylam{x_1}{\ldots\mylam{x_n}{y \cbohm{M_1} \ldots \cbohm{M_m}}}}\]

  Using the elementary diagram above ($k$ times),
  we can project $M \dred N$ over $M \hredn{k} H$,
  and obtain
  $H \dred H'$, 
  $N \hredn{\ell} H' \equiv \mylam{x_1}{\ldots\mylam{x_n}{y M_1' \ldots M_m'}}$
  with $\ell \le k$.
  Then
  $\cbohm{N} \equiv \annotate{\ell}{\mylam{x_1}{\ldots\mylam{x_n}{y \cbohm{M_1'} \ldots \cbohm{M_m'}}}}$
  and $\ell \le k$.
  Since $H \dred H'$ and $H$ is in hnf,
  we get $M_i \dred M_i'$ for every $i = 1,\ldots,m$.
  Co-recursively applying the same argument to $M_i \dred M_i'$
  yields $\cbohm{M} \ge \cbohm{N}$.
\end{proof}

While $\cbohm{M} \not\equiv \cbohm{N}$ does not imply $M \notconv N$,
the following theorem allows us to use clocked \bohm{} trees
for discriminating $\lambda$-terms:
\begin{theorem}\label{thm:general}
  Let $M$ and $N$ be $\lambda$-terms.
  Assume there exists a reduct $N'$ of $N$
  such that for no reduct $M'$ of $M$ we have
  $M' \crel{\le} N'$.
  Then $M \ne_\beta N$.
\end{theorem}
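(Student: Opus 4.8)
The plan is to prove the contrapositive. The care is entirely in the order of the quantifiers, so let me make it explicit first. The hypothesis reads $\exists N'\,\forall M'\,\neg(M' \crel{\le} N')$, where $N'$ ranges over reducts of $N$ and $M'$ over reducts of $M$. Its negation is $\forall N'\,\exists M'\,(M' \crel{\le} N')$. Hence, assuming $M \conv N$, I must establish that for \emph{every} reduct $N'$ of $N$ there is \emph{some} reduct $M'$ of $M$ with $M' \crel{\le} N'$; in particular I have to fix an arbitrary $N'$ \emph{first} and only then manufacture a matching $M'$ (possibly depending on $N'$). Deriving this contradicts the standing hypothesis and yields $M \notconv N$.

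So fix an arbitrary reduct $N'$ of $N$. Since $M \conv N \mred N'$, we have $M \conv N'$. The key tool is confluence (Church--Rosser) of the $\lambda$-calculus: it turns the conversion $M \conv N'$ into a co-initial pair of reductions to a common reduct $L$, with $M \mred L$ and $N' \mred L$. I then simply take $M' := L$, which is a reduct of $M$ as required, and claim that this choice works for the fixed $N'$.

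To verify $M' \crel{\le} N'$ I apply Proposition~\ref{prop:clocks} to the reduction $N' \mred L$. Since ${\mred} \subseteq {\crel{\ge}}$, we obtain $N' \crel{\ge} L$. Unfolding the definition of $\crel{\ge}$ delivers both ingredients at once: $N' \treeequal{\sbohm} L$ (whence $L \treeequal{\sbohm} N'$ by symmetry of $\equiv$, so $\cbohm{L}$ and $\cbohm{N'}$ share the same position set and the positionwise comparison is well-defined) and $\cbohm{N'} \ge \cbohm{L}$ pointwise, i.e.\ $\cbohm{L} \le \cbohm{N'}$. Together these are exactly $L \crel{\le} N'$, that is $M' \crel{\le} N'$. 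As $N'$ was an arbitrary reduct of $N$, this contradicts the hypothesis, completing the contrapositive.

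I do not expect a genuine obstacle: once confluence converts $M \conv N'$ into a shared reduct $L$, Proposition~\ref{prop:clocks} does all the remaining work. The single point worth flagging is that the B\"ohm-tree-equality side condition baked into $\crel{\le}$---the thing that makes the positionwise clock comparison meaningful---is not an extra burden. It is supplied for free by Proposition~\ref{prop:clocks}, since $\crel{\ge}$ is defined to include $\treeequal{\sbohm}$, so ${\mred}\subseteq{\crel{\ge}}$ already encodes that $\beta$-reduction preserves the (unclocked) B\"ohm tree. The degenerate case in which $M$ (hence $N'$ and $L$) has no hnf is likewise subsumed, as Proposition~\ref{prop:clocks} already handles $\cbohm{\cdot} = \sink$.
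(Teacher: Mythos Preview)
Your proof is correct and follows essentially the same route as the paper: argue by contrapositive, use Church--Rosser to obtain a common reduct of $M$ and $N'$, and invoke Proposition~\ref{prop:clocks} to conclude $M' \crel{\le} N'$. The paper's version is terser (it fixes the witnessing $N'$ directly rather than quantifying over all reducts of $N$), but the mathematical content is identical.
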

\begin{proof}
  If $M =_\beta N$ then $M =_\beta N'$ and $M \mred M' \mredi N'$
  for some $M'$. Hence $M' \crel{\le} N'$ by Proposition~\ref{prop:clocks}.
\end{proof}

\noindent
The theorem allows us to pick $N'$
while having to show that $M' \not\crel{\le} N'$ for all reducts $M'$ of $M$.
The latter condition is in general difficult to prove.
However, the theorem is of use if one of the terms 
has a manageable set of reducts, and this term happens to have slower clocks.

For a large class of $\lambda$-terms it turns out that clocks are invariant under reduction.
We call these terms `simple'.
\begin{definition}
  A redex $(\mylam{x}{A}){B}$ is called:
  \begin{enumerate}\setlength{\itemsep}{0ex}
    \item \emph{linear} if $x$ has at most one occurrence in $A$;
    \item \emph{call-by-value} if $B$ is a normal form; and
    \item \emph{simple} if it is linear or call-by-value.
  \end{enumerate}
\end{definition}

The definition of simple redexes generalizes the well-known notions
of call-by-value and linear redexes. 
Next, we define simple \emph{terms}.
Intuitively, we call a term $t$ `simple' if every reduction
admitted by $t$ only contracts simple redexes.
The following definition 
further generalises this intuition by considering
only standard reductions (to normal form):
\begin{definition}[Simple terms]
  A $\lambda$-term $t$ is \emph{simple}
  if either $t$ has no hnf, 
  or the head reduction to hnf $t \hredn{k} \mylam{x_1}{\ldots\mylam{x_n}{y M_1 \ldots M_m}}$
  contracts only simple redexes,
  and $M_1,\ldots,M_m$ are simple terms.
\end{definition}

All the fpc's in this paper are either simple or have simple reducts.
The clock of simple $\lambda$-terms is invariant under reduction,
that is, when ignoring finite prefixes of the clocked \bohm{} trees
(by reducing a term we can always make the clock values
in a finite prefix equal to $0$).
\begin{proposition}\label{prop:simple}
  Let $M$, $N$ be $\lambda$-terms such that $M$ is a simple term and $M \mred N$.
  Then $M \crelev{=} N$, that is, the clocks of $M$ and $N$ are eventually equal.
\end{proposition}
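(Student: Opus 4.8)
The plan is to upgrade the ``clocks accelerate'' inequality of Proposition~\ref{prop:clocks} to an \emph{eventual} equality by controlling \emph{where} the acceleration can take place. By Proposition~\ref{prop:clocks}, $M \mred N$ gives $M \crel{\ge} N$, that is, $M \treeequal{\sbohm} N$ and the clock value of $M$ at every position is at least that of $N$; in particular $\cbohm{M}$ and $\cbohm{N}$ have the same set of positions. It therefore suffices to produce a depth $\adepth$ such that at every position $\apos$ with $\length{\apos} \ge \adepth$ the two clock values \emph{coincide}; equivalently, it suffices to show that along the finite reduction $M \mred N$ only finitely many positions, all of bounded depth, get sped up.

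First I would reduce to a single step, using the preservation lemma that if $M$ is simple and $M \to N$ then $N$ is again simple --- the point being that contracting a linear or call-by-value redex lying on a head reduction neither duplicates a redex nor creates a non-simple head redex --- so that the single-step claim may be iterated along $M \mred N$. For a single step $M \to N$ contracting a redex $R$, let $M \hredn{k} \mylam{x_1}{\ldots\mylam{x_n}{y M_1 \ldots M_m}}$ be the head reduction to hnf witnessing the root of $\cbohm{M}$ (if $M$ has no hnf the claim is vacuous, both clocked trees being $\sink$), and distinguish:
\begin{enumerate}
  \item $R$ is the head redex: then $N$ reaches the same hnf in $k-1$ head steps, so $\cbohm{N}$ is $\cbohm{M}$ with the root annotation lowered by one and identical everywhere else.
  \item $R$ is not the head redex: projecting $M \to N$ over $M \hredn{k} H$ with the elementary diagram of Proposition~\ref{prop:clocks}, and using that every head redex contracted along $M \hredn{k} H$ is linear or call-by-value, the residual of $R$ is never duplicated, and never occurs in the argument of a call-by-value head redex (a normal form). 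Its unique residual then either (a)~becomes a head redex along the way, so $N$ reaches $H$ in $k-1$ head steps and the root annotation drops by one, the rest unchanged; or (b)~survives into $H$ inside some argument $M_i$, so $\cbohm{N}$ keeps the root annotation $k$ and differs from $\cbohm{M}$ only in its $i$-th subtree, where the claim follows by recursion on $M_i \to M_i'$, strictly deeper in the tree; or (c)~is erased by a later head redex without altering the remaining head reduction, so that $\cbohm{M} \equiv \cbohm{N}$.
\end{enumerate}
Thus one step changes $\cbohm{}$ at a single position of finite depth, lowering the clock there by at most one and leaving the rest of the tree intact; iterating over the finitely many steps of $M \mred N$ affects only finitely many positions, whose depths are bounded by some $\adepth$, and for $\length{\apos} \ge \adepth$ the clocks of $M$ and $N$ agree, which with Proposition~\ref{prop:clocks} yields $M \crelev{=} N$.

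The step I expect to be the main obstacle is case~(ii) together with the preservation lemma: one must make the residual bookkeeping precise --- a redex off the head line keeps exactly one residual through successive contractions of simple head redexes, and erasure of that residual neither shortens nor reshapes the remaining head reduction --- and one must check that contracting a linear or call-by-value redex never produces a non-simple head redex. Both are routine in spirit but need care with residuals under erasing linear steps and under call-by-value steps that copy a normal form into several positions.
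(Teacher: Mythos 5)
Your core mechanism is the same as the paper's: since the head redexes contracted by a simple term are linear or call-by-value, the elementary diagram joining a head step with a co-initial step involves no duplication, so a side redex has at most one residual along the head reduction and can cancel at most one head step; hence only finitely many annotations can change. Where you diverge is in the bookkeeping: you reduce to a single step and propose to iterate via the preservation lemma ``if $M$ is simple and $M \to N$ then $N$ is simple''. That lemma is \emph{false}. Take
\[
  M \equiv (\mylam{x}{f\,((\mylam{y}{c})\,((\mylam{z}{g z z})\,x))})\,B,
  \qquad B \equiv (\mylam{v}{v})(\mylam{v}{v}).
\]
Here $M$ is simple: its head reduction is the single linear step $M \hred f\,((\mylam{y}{c})\,((\mylam{z}{g z z})\,B))$, and the sole argument of this hnf head-reduces to $c$ by one further linear (vacuous) step, after which there are no arguments left. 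Contracting the inner redex $(\mylam{z}{g z z})\,x$ gives $N \equiv (\mylam{x}{f\,((\mylam{y}{c})\,(g x x))})\,B$, whose head redex is neither linear ($x$ now occurs twice) nor call-by-value ($B$ is not a normal form), so $N$ is not simple. (The conclusion of the proposition still holds for this pair, since the duplicated material is discarded; but your induction has nothing to stand on from the second step onwards.)

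The paper's proof is organized precisely so as not to need simplicity of reducts: it projects each head step over the \emph{entire} reduction $M \to^n N$ and counts cancellations, each of which consumes one of the $n$ given steps; since $M$'s (simple) head reduction does not multiply residuals, at most $n$ annotation changes can occur in the whole clocked tree, which yields $M \crelev{=} N$. Your single-step case analysis (i)/(ii) is essentially the first column of that tiling and is fine; what is missing is a way of chaining the steps that only ever invokes the simplicity of $M$ itself (and, co-recursively, of the arguments of its hnf). Either adopt the paper's global cancellation count, or replace ``simple'' by a reduction-closed invariant; as stated, the reduction to a single step does not go through.
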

\begin{proof}
  The proof is a straightforward extension of the proof of~Proposition~\ref{prop:clocks}
  with the observation that for simple terms $M$,
  rewriting $M$ to hnf:
  \[M \hredn{k} H \equiv \mylam{x_1}{\ldots\mylam{x_n}{y M_1 \ldots M_m}}\] 
  does not duplicate redexes.
  Hence, the elementary diagrams
  are now of the form displayed in Figure~\ref{fig:elementary:diagram:simple}.
  \begin{figure}[ht!]
  \begin{center}
  \begin{tikzpicture}[thick,node distance=17mm]
    \node (M) {$M$};
    \node (M1) [right of=M] {$M_1$};
    \node (M2) [below of=M] {$M_2$};
    \node (M') [below of=M1] {$M'$};
    \draw [->,shorten >= 1mm] (M) -- (M1) node [pos=.95,below] {$h$};
    \draw [->] (M) -- (M2); 
    \draw [->] (M2) -- (M') node [midway,above] {$\varnothing$};
    \draw [->] (M1) -- (M') node [midway,right] {$\varnothing$};

    \begin{scope}[xshift=4cm]
    \node (M) {$M$};
    \node (M1) [right of=M] {$M_1$};
    \node (M2) [below of=M] {$M_2$};
    \node (M') [below of=M1] {$M'$};
    \draw [->,shorten >= 1mm] (M) -- (M1) node [pos=.95,below] {$h$};
    \draw [->] (M) -- (M2); 
    \draw [->,shorten >= 1mm] (M2) -- (M') node [pos=.9,below] {$h$};
    \draw [->] (M1) -- (M');
    \end{scope}
  \end{tikzpicture}
  \vspace{-2ex}
  \caption{Elementary diagrams for simple $\boldsymbol{M}$.}
  \vspace{-2ex}
  \label{fig:elementary:diagram:simple}
  \end{center}
  \end{figure}
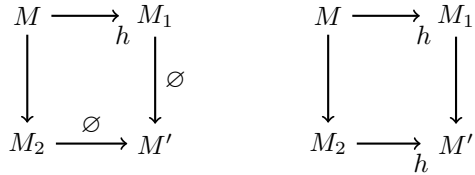
  
  That is, whenever we have co-initial steps $M \hred M_1$ and $M \to M_2$
  and $M$ is a simple term,
  then either the steps cancel each other out $M_1 \equiv M_2$ (if both are the same step),
  or they can be joined by single steps $M_1 \hred M' \redi M_2$.

  As a consequence, when projecting $M \hred M'$ over a rewrite sequence $M \to^n N$
  then either $N \to^n M'' \hredi M'$ or
  there has been cancellation and $N \to^{n-1} M'' \equiv M'$.
  Every cancellation decreases the number of steps $N \to^{n-1} M''$,
  and hence there can only finitely many cancellations.
  This implies the claim that $\cbohm{M}$ is equal to $\cbohm{N}$
  modulo a finite prefix, that is, $M \crelev{=} N$.
\end{proof}

Reduction accelerates clocks, i.e., $\smred \subseteq {\crel{{\geq}}}$.
Moreover, for simple terms the clock is invariant under reduction, see Proposition~\ref{prop:simple}.
Hence if a term $M$ has a simple reduct $N$, then $N$ has the fastest clock reachable from $M$ 
modulo a finite prefix. This justifies the following convention.
\begin{convention}
  The \emph{(minimal) clock} of a $\lambda$-term $M$ with a simple reduct $N$
  is $\cbohm{N}$, the clocked BT of $N$.
\end{convention}

For simple terms we obtain the following theorem:
\begin{theorem}\label{thm:simple}
  Let $M$ and $N$ be $\lambda$-terms.
  If there exists a reduct $N'$ of $N$
  and a simple reduct $M'$ of $M$ such that
  $M' \creli{>} N'$, then $M \ne_\beta N$.
\end{theorem}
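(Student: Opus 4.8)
The plan is to argue by contradiction, combining confluence of $\lambda\beta$ with the two facts about clock behaviour under reduction established above: that reduction accelerates clocks in general (Proposition~\ref{prop:clocks}), and that it leaves the clock of a \emph{simple} term eventually unchanged (Proposition~\ref{prop:simple}). The idea is that a common reduct $P$ of $M'$ and $N'$ must, on the one hand, have essentially the same (fast) clock as the simple term $M'$, and on the other hand a clock no faster than $N'$'s; but $M'$ is strictly faster than $N'$ at infinitely many positions, which is incompatible.

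Concretely, suppose towards a contradiction that $M =_\beta N$. Since $M'$ is a reduct of $M$ and $N'$ a reduct of $N$, we get $M' =_\beta N'$, so by the Church--Rosser theorem there is a common reduct $P$ with $M' \mred P$ and $N' \mred P$. Because $\beta$-reduction preserves \boehm{} trees, $\sbohm(M') \equiv \sbohm(P) \equiv \sbohm(N')$; in particular $\pos{\cbohm{M'}} = \pos{\cbohm{P}} = \pos{\cbohm{N'}}$, and a position is annotated in one of these clocked trees iff it is annotated in the others (the annotated positions are exactly the roots of the hnf-nodes of the shared \boehm{} tree). Now applying Proposition~\ref{prop:simple} to $M' \mred P$, and using that $M'$ is simple, gives $M' \crelev{=} P$: there is a depth $\ell \in \nat$ with $\cbohm{M'} \relat{=}{p} \cbohm{P}$ for every position $p$ with $\length{p} \ge \ell$. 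And Proposition~\ref{prop:clocks} applied to $N' \mred P$ yields $N' \crel{\ge} P$, i.e.\ $\cbohm{N'} \relat{\ge}{p} \cbohm{P}$ for every position $p$.

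It remains to extract the contradiction. By hypothesis $M' \creli{>} N'$, so there are infinitely many positions $p$ with $\cbohm{M'} \relat{>}{p} \cbohm{N'}$. Since a clocked \boehm{} tree is finitely branching, only finitely many positions have depth $< \ell$, so we may choose such a $p$ with $\length{p} \ge \ell$. Let $a$, $b$, $c$ be the annotations of $\cbohm{M'}$, $\cbohm{N'}$, $\cbohm{P}$ at $p$ (all defined, by the remark above on annotated positions). Then $a = c$ from $\cbohm{M'} \relat{=}{p} \cbohm{P}$, $a > b$ from $\cbohm{M'} \relat{>}{p} \cbohm{N'}$, and $b \ge c$ from $\cbohm{N'} \relat{\ge}{p} \cbohm{P}$, whence $c = a > b \ge c$, a contradiction. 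Therefore $M \ne_\beta N$.

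The only genuinely delicate point is the bookkeeping that makes the three annotation relations at $p$ simultaneously meaningful: one must know that $\cbohm{M'}$, $\cbohm{N'}$ and $\cbohm{P}$ have the same set of positions and the same set of annotated positions, which rests on invariance of the \boehm{} tree under $\beta$-conversion together with the $\treeequal{\sbohm}$-part of the hypothesis $M' \creli{>} N'$. Everything else is routine: observing that infinitely many witnessing positions cannot all lie in the finite top part of depth $< \ell$, and chaining the three (in)equalities on the annotations at the chosen $p$.
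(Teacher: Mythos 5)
Your proof is correct and follows essentially the same route as the paper's: assume $M \conv N$, take a common reduct of $M'$ and $N'$ by confluence, apply Proposition~\ref{prop:simple} to the simple term $M'$ and Proposition~\ref{prop:clocks} to $N'$, and derive a contradiction with $M' \creli{>} N'$ at a sufficiently deep annotated position. The paper states this more tersely, but your added bookkeeping (matching position sets, choosing a witness beyond depth $\ell$) is exactly the content it leaves implicit.
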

\begin{proof}
  Assume $M =_\beta N$. Then $M' = N'$ and by confluence they have a common reduct
  $M' \mred M'' \mredi N'$.
  Since $M'$ is simple, we have $M' \crelev{=} M''$ by Proposition~\ref{prop:simple}.
  Hence, $M' \creli{>} N'$ implies $M'' \creli{>} N'$,
  which would contradict $M'' \crel{\le} N'$ obtained by Proposition~\ref{prop:clocks}.
\end{proof}
Theorem~\ref{thm:simple} significantly reduces 
the proof obligation in comparison to Theorem~\ref{thm:general}.
We can pick any simple reduct $M'$ of $M$,
instead of having to reason about all reducts $M'$ of $M$.
For the case that both $M$ and $N$ are simple, there is no need to look for reducts:
\begin{proposition}\label{cor:simple:simple}
  For simple terms $M$ and $N$,
  $M \creli{\ne} N$ implies $M \ne_\beta N$.
\end{proposition}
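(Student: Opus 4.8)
The plan is to derive this as an immediate corollary of Theorem~\ref{thm:simple}. Suppose $M$ and $N$ are both simple and $M \creli{\ne} N$. The first thing to observe is that $M \creli{\ne} N$ means, by definition, that $M \treeequal{\sbohm} N$ and that $\cbohm{M} \relat{\ne}{\apos} \cbohm{N}$ for infinitely many positions $\apos \in \pos{\cbohm{M}}$. At each such position the two clocked trees are annotated with natural numbers $k_1 \ne k_2$; since $\ne$ is the union of $<$ and $>$, by the infinite pigeonhole principle at least one of the two relations holds at infinitely many positions. Without loss of generality (swapping the roles of $M$ and $N$, noting the statement is symmetric), assume $\cbohm{M} \relat{>}{\apos} \cbohm{N}$ for infinitely many $\apos$, i.e.\ $M \creli{>} N$.

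Next I would simply invoke Theorem~\ref{thm:simple} with $M' \defeq M$ and $N' \defeq N$: these are (trivial, length-zero) reducts of themselves, $M'$ is simple by hypothesis, and $M' \creli{>} N'$ has just been established. The theorem then yields $M \ne_\beta N$, which is what we wanted. One should double-check the asymmetry point: Theorem~\ref{thm:simple} requires the faster clock ($\creli{>}$) on the side whose reduct is taken to be simple. Since in our setting \emph{both} $M$ and $N$ are simple, whichever of $M \creli{>} N$ or $N \creli{>} M$ we end up with, the corresponding term is simple and we may apply Theorem~\ref{thm:simple} (in the second case obtaining $N \ne_\beta M$, hence $M \ne_\beta N$).

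There is essentially no hard part here; the only thing to be careful about is the decomposition of the ``$\ne$ infinitely often'' hypothesis into ``$<$ infinitely often or $>$ infinitely often'', which is exactly the pigeonhole step above, and the bookkeeping that the relation $\relat{\ne}{\apos}$ at an annotated position really does force distinct natural-number annotations (both subtrees are annotated because $M \treeequal{\sbohm} N$ guarantees $\pos{\cbohm{M}} = \pos{\cbohm{N}}$ and the annotation pattern is determined by the shared \bohm{} tree structure). With that in place the result is a one-line consequence of Theorem~\ref{thm:simple}.
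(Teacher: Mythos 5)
Your proof is correct, but it takes a different route from the paper. The paper proves the proposition directly and symmetrically: assuming $M \conv N$, confluence yields a common reduct $O$, and since \emph{both} $M$ and $N$ are simple, Proposition~\ref{prop:simple} applies on both sides to give $M \crelev{=} O \crelev{=} N$, hence $M \crelev{=} N$, contradicting $M \creli{\ne} N$. You instead reduce the statement to Theorem~\ref{thm:simple}: split $\ne$ into $>$ or $<$ by pigeonhole, pass without loss of generality to $M \creli{>} N$, and invoke the theorem with the trivial reducts $M' \defeq M$, $N' \defeq N$. Both arguments rest on the same underlying machinery (confluence plus clock invariance for simple terms), but the paper's version avoids your case split and WLOG bookkeeping by exploiting the symmetry of the hypothesis, whereas yours has the merit of exhibiting the proposition as a formal corollary of the already-established Theorem~\ref{thm:simple} with no new diagram argument. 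Your parenthetical care about which side of Theorem~\ref{thm:simple} must carry the simple reduct and the faster clock is exactly the right check, and it goes through because both terms are simple. One small point worth keeping in mind (it affects the paper's own proof equally): for the pigeonhole step to produce genuinely differing numerical annotations, the infinitely many witnessing positions of $\creli{\ne}$ must be read as \emph{annotated} positions where $k_1 \ne k_2$, since by the letter of the definition $\relat{\ne}{\apos}$ also holds vacuously at unannotated positions; you flag this correctly.
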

\begin{proof}
  Assume $M = N$ then $M \mred O \mredi N$ for a common reduct $O$.
  Then $M \crelev{=} O \crelev{=} N$ by Proposition~\ref{prop:simple}. 
  Hence $M \crelev{=} N$ which contradicts $M \creli{\ne} N$.
\end{proof}

\begin{example}\label{ex:boehm:seq}
  Let $n \geq 2$.
  We compute the clocks of the fpc's 
  $\fpc{n}$ of the \boehm{} sequence.
  We first reduce $\fpc{n} \defeq \leftappiterate{\fpcC}{\delta}{n}$
  with $\fpcC \defeq \mylam{f}{\omega_{f}\omega_{f}}$
  and $\omega_f \defeq \mylam{x}{f(xx)}$ to a simple term:
  \begin{align*}
    \fpc{n} x
    \to \leftappiterate{\omega_\delta \, \omega_\delta}{\delta}{n-1} x
    \mred \leftappiterate{\nf{\omega_\delta} \, \nf{\omega_\delta}}{\delta}{n-1} x
  \end{align*}
  where $\nf{\omega_\delta} = \mylam{ab}{b(aab)}$.
  We compute the clock:
  \begin{align*}
    \leftappiterate{\nf{\omega_\delta} \, \nf{\omega_\delta}}{\delta}{n-1} x
    & \hredn{2} \leftappiterate{\delta (\nf{\omega_\delta} \, \nf{\omega_\delta} \delta)}{\delta}{n-2} x
    \\
    & \hredn{2(n-2)} \delta (\leftappiterate{\nf{\omega_\delta} \, \nf{\omega_\delta}}{\delta}{n-1}) x
    \\
    & \hredn{2} x (\leftappiterate{\nf{\omega_\delta} \, \nf{\omega_\delta}}{\delta}{n-1} x) 
  \end{align*}
  We find 
  $\cbohm{\leftappiterate{\nf{\omega_\delta} \, \nf{\omega_\delta}}{\delta}{n-1} x} 
   = \annotate{2n}{(x \cbohm{\leftappiterate{\nf{\omega_\delta} \, \nf{\omega_\delta}}{\delta}{n-1} x})}$.
  Hence, for $n \geq 2$ the clock of $\fpc{n}$ is $2n$.
\end{example}
\noindent
By Theorem~\ref{thm:simple}, Example~\ref{ex:boehm:seq}
and Figure~\ref{fig:boem:y0:y1} we obtain an alternative 
proof for Theorem~\ref{thm:boehm:seq}: the \boehm{} sequence contains no duplicates.

\begin{example}\label{ex:scott:seq}
  Let $n \geq 2$.
  We compute the clocks of the fpc's 
  $U_n \defeq \leftappiterate{BY}{S}{n} I$
  of the Scott sequence; so where $Y = \fpcC$.
  We first reduce $U_n$ to a simple term:
  \begin{align*}
    U_n x 
    & \mred \leftappiterate{Y (S S)}{S}{(n-2)} I x \\
    & \to \leftappiterate{\omega_{SS}\,\omega_{SS}}{S}{(n-2)} I x \\
    & \mred \leftappiterate{\nf{\omega_{SS}}\,\nf{\omega_{SS}}}{S}{(n-2)} I x
  \end{align*}
  where $\nf{\omega_{SS}} = \mylam{abc}{bc(aabc)}$.
  We abbreviate $\theta \equiv \nf{\omega_{SS}}$.
  Then we compute the clocks for $n = 2$, $n = 3$, and $n > 3$:
  \begin{align*}
    \theta\theta I x 
    & \hredn{3} I x (\theta\theta I x) 
    \hredn{1} x (\theta\theta I x) 
    \\
    \theta\theta S I x 
    & \hredn{3} {S I(\theta\theta S I)} x
    \hredn{4} {x (\theta\theta S I x)} 
    \\
    \leftappiterate{\theta\theta}{S}{(n-2)} I x
    &\hredn{3} \leftappiterate{ S S (\theta\theta S S)}{S}{(n-4)} I x
    \\
    &\hredn{3(n-4)} S S (\leftappiterate{ \theta \theta S S }{S}{(n-4)}) I x
    \\
    &\hredn{3} S I (\leftappiterate{ \theta \theta }{S}{(n-2)} I) x
    \\
    &\hredn{4} x (\leftappiterate{ \theta \theta }{S}{(n-2)} I x)
  \end{align*}
  respectively.
  For all three cases, we find: 
  \[\cbohm{\leftappiterate{\theta\theta}{S}{(n-2)} I x} 
   = \annotate{3(n-2)}{(x \cbohm{\leftappiterate{ \theta \theta }{S}{(n-2)} I x})}\]
\end{example}
\noindent
Using Theorem~\ref{thm:simple} we infer from Example~\ref{ex:scott:seq}
and Figure~\ref{fig:boem:y0:y1} (recall that $U_0 \conv Y_0$ and $U_1 \conv Y_1$):
\begin{corollary}
  The Scott sequence contains no duplicates.  
\end{corollary}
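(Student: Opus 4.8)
The plan is to bring every member of the Scott sequence into a \emph{simple} form whose clock is, past a finite prefix, a single natural number, and then to invoke Proposition~\ref{cor:simple:simple}. Write $U_n\defeq\leftappiterate{BY}{S}{n}I$ with $Y=\fpcC$. Since each $U_n$ is an fpc (Theorem~\ref{thm:scott:sequence}), all of them have \boehm{} tree $\mylam{f}{f^{\omega}}$; in particular any two members are $\sbohm$-equal, which is the standing hypothesis needed for the comparison $\creli{\ne}$.

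First I would fix a simple reduct of each $U_n$. For $n\ge 2$, Example~\ref{ex:scott:seq} already reduces $U_n$ to $P_n\defeq\leftappiterate{\theta\theta}{S}{(n-2)}I$, where $\theta\equiv\mylam{abc}{bc(aabc)}$; for $n=0$ and $n=1$ I take $P_0\defeq\fpcC$ and $P_1\defeq\fpcT$, using $U_0\conv\fpcC$ and $U_1\conv\fpcT$. Each $P_n$ is simple: along its head reduction to head normal form, and co-recursively along the head reductions of the arguments that appear, every contracted redex has the form $(\mylam{x}{A})B$ where $B$ is either a normal form (one of the auxiliary combinators $\mylam{x}{f(xx)}$, $\mylam{xf}{f(xxf)}$, $\theta$, or $S$, $I$, or a variable) or occurs at most once in $A$; hence it is call-by-value or linear, i.e.\ simple.

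Next I read off the (minimal) clocks. From Figure~\ref{fig:boem:y0:y1}, $\fpcC$ has clock eventually $1$ and $\fpcT$ has clock constantly $2$; from the head reduction computed in Example~\ref{ex:scott:seq}, the clocked \boehm{} tree of $P_n$ (for $n\ge 2$) has, from some depth on, all annotations equal to the value $c_n$ found there, and $c_n$ is strictly increasing in $n$ and never equal to $1$ or $2$. Hence the eventual clock values attached to $U_0,U_1,U_2,U_3,\dots$ are pairwise distinct.

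It remains to conclude. Fix $n\ne m$. The clocked \boehm{} trees $\cbohm{P_n}$ and $\cbohm{P_m}$ have the common shape $\mylam{f}{f(f(f(\cdots)))}$, and at every sufficiently deep position $\apos$ the annotation of $\cbohm{P_n}$ is the eventual clock of $U_n$ while that of $\cbohm{P_m}$ is the eventual clock of $U_m$, and these differ; so $\cbohm{P_n}\relat{\ne}{\apos}\cbohm{P_m}$ for infinitely many $\apos$, i.e.\ $P_n\creli{\ne}P_m$. Since $P_n$ and $P_m$ are simple, Proposition~\ref{cor:simple:simple} gives $P_n\ne_\beta P_m$, and as $U_n\conv P_n$ and $U_m\conv P_m$ this yields $U_n\notconv U_m$; so the Scott sequence contains no duplicates. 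I expect no genuine obstacle here beyond two bookkeeping points: verifying simplicity of the $P_n$ (which is exactly the call-by-value/linear observation above) and treating $n\in\{0,1\}$ on the same footing as $n\ge 2$, i.e.\ checking that the small clock values $1$ and $2$ lie outside the range of $(c_n)_{n\ge 2}$.
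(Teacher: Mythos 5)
Your proposal is correct and follows essentially the same route as the paper: reduce each $U_n$ to the simple reduct computed in Example~\ref{ex:scott:seq} (with $U_0\conv\fpcC$, $U_1\conv\fpcT$ handled via Figure~\ref{fig:boem:y0:y1}), observe that the eventual clock values $1$, $2$, and $3n-2$ for $n\ge 2$ are pairwise distinct, and conclude by the simple-terms discrimination result. The only cosmetic difference is that you invoke Proposition~\ref{cor:simple:simple} where the paper cites Theorem~\ref{thm:simple}; since all the reducts involved are simple, both apply equally well.
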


Plotkin~\cite{plot:2007} asked:
Is there an fpc $Y$ such that
\begin{align*}
  A_Y \equiv Y(\mylam{z}{fzz}) \conv Y(\mylam{x}{Y(\mylam{y}{fxy})}) \equiv B_Y 
\end{align*}
or in other notation: $\mymu{z}{fzz} \conv \mymu{z}{fzz}$,
where 
$\mymu{x}{M(x)} = Y(\mylam{x}{M(x)})$.
The terms $A_Y$ and $B_Y$ have the same \boehm{} tree,
which is the solution of $T = f T T$.

The terms $A_Y$ and $B_Y$ are not simple.
An extension of our clock method can be given 
which restricts the clock comparison to single paths in the clocked \boehm{} tree
along which there is no duplication of redexes.
We leave this extension to future work.
Using this extension would allow us to settle 
the question in the negative for all simple fpc's.
For Turing's fpc $\fpcT$ this is seen by computing 
the clocked BT's of $A_{\fpcT}$ and $B_{\fpcT}$.
Recall $\fpcT \equiv \eta\eta$ with $\eta \equiv \mylam{xf}{f(xxf)}$.
\begin{align*}
  A_{\fpcT} 
  &\equiv \eta\eta(\mylam{z}{fzz})
  \hredn{2} (\mylam{z}{fzz}) A_{\fpcT} 
  \\
  & \hredn{1} f A_{\fpcT} A_{\fpcT} 
  \\
  B_{\fpcT} 
  &\equiv \eta\eta(\mylam{x}{\eta\eta(\mylam{y}{fxy})})
  \\
  &\hredn{2} 
  (\mylam{x}{\eta\eta(\mylam{y}{fxy})}) B_{\fpcT} 
  \\
  &\hredn{1} 
  \eta\eta(\mylam{y}{f B_{\fpcT} y})
  \\
  &\hredn{2}
  (\mylam{y}{f B_{\fpcT} y})(\eta\eta(\mylam{y}{f B_{\fpcT} y}))
  \\
  &\hredn{1}
  f B_{\fpcT} (\eta\eta(\mylam{y}{f B_{\fpcT} y}))
\end{align*}
Note that for $B_{\fpcT}$ developing the left branch takes six steps,
whereas the right only needs three.
The clocked BT's for $A_{\fpcT}$ and $B_{\fpcT}$ are depicted in Figure~\ref{fig:plotkin}
using hnf-notation (see~\cite{bare:1984} or~\cite{bare:klop:2009}).

\begin{figure}[ht!]
\begin{center}
  \begin{tikzpicture}[level distance=7mm,inner sep=0.5mm,
                      level 1/.style={sibling distance=18mm},
                      level 2/.style={sibling distance=9mm},
                      level 3/.style={sibling distance=4.5mm},
                      level 4/.style={sibling distance=2.25mm}
                     ]
    \node  {$f$} \annotatednode{$f$}{3}
      child { node {$f$} \annotatednode{$f$}{3}
        child { node {$f$} \annotatednode{$f$}{3}
          child { node {$\ldots$} }
          child { node {$\ldots$} }
        }
        child { node {$f$} \annotatednode{$f$}{3}
          child { node {$\ldots$} }
          child { node {$\ldots$} }
        }
      }
      child { node {$f$} \annotatednode{$f$}{3}
        child { node {$f$} \annotatednode{$f$}{3}
          child { node {$\ldots$} }
          child { node {$\ldots$} }
        }
        child { node {$f$} \annotatednode{$f$}{3}
          child { node {$\ldots$} }
          child { node {$\ldots$} }
        }
      };
  \end{tikzpicture}\quad
  \begin{tikzpicture}[level distance=7mm,inner sep=0.5mm,
                      level 1/.style={sibling distance=18mm},
                      level 2/.style={sibling distance=9mm},
                      level 3/.style={sibling distance=4.5mm},
                      level 4/.style={sibling distance=2.25mm}
                     ]
    \node  {$f$} \annotatednode{$f$}{6}
      child { node {$f$} \annotatednode{$f$}{6}
        child { node {$f$} \annotatednode{$f$}{6}
          child { node {$\ldots$} }
          child { node {$\ldots$} }
        }
        child { node {$f$} \annotatednode{$f$}{3}
          child { node {$\ldots$} }
          child { node {$\ldots$} }
        }
      }
      child { node {$f$} \annotatednode{$f$}{3}
        child { node {$f$} \annotatednode{$f$}{6}
          child { node {$\ldots$} }
          child { node {$\ldots$} }
        }
        child { node {$f$} \annotatednode{$f$}{3}
          child { node {$\ldots$} }
          child { node {$\ldots$} }
        }
      };
  \end{tikzpicture}
\caption{Clocked BT's for $\boldsymbol{A_{\fpcT}}$ and $\boldsymbol{B_{\fpcT}}$.}
\label{fig:plotkin}
\end{center}
\end{figure}

We conjecture that for no fpc $Y$, $A_Y \conv B_Y$;
maybe this requires an extension of the proof in~\cite{intri:1997}.

\section{Atomic Clocks}\label{sec:atom}

We have introduced clocked \bohm{} trees for discriminating $\lambda$-terms.
In this section, we refine the clocks to measure not only the \emph{number} of head
steps, but, in addition, the \emph{position} of each of these steps.
We call these clocks `atomic'.

Let us consider a motivating example.
We discriminate $Y_2$ and $U_2$.
First, we reduce both terms to simple reducts:
\begin{align*}
  Y_2 x &\equiv \fpcC \delta \delta x \mred \xi \xi \delta x && \text{where $\xi = \mylam{ab}{b(aab)}$}\\
  U_2 x &\equiv \fpcC (S S) I x \mred \theta \theta I x && \text{where $\theta = \mylam{abc}{bc(aabc)}$}
\end{align*}
We compute the atomic clocks of these simple reducts:
\begin{align*}
  \xi \xi \delta x
  &\hredat{11} (\mylam{b}{b(\xi\xi b)}) \delta x
  \hredat{1} \delta (\xi\xi \delta) x\\
  &\hredat{1} (\mylam{b}{b (\xi\xi \delta b)}) x
  \hredat{\posemp} x (\xi\xi \delta x)
  \\
  \theta \theta I x
  &\hredat{11} (\mylam{bc}{bc(\theta\theta bc)}) I x
  \hredat{1} (\mylam{c}{Ic(\theta\theta Ic)}) x\\
  &\hredat{\posemp} I x(\theta\theta I x)
  \hredat{1} x(\theta\theta I x)
\end{align*}
Both terms have the clocked \bohm{} tree $T \equiv \annotate{4} (x T)$.
Thus the method from the previous section is not applicable.

However, with atomic clocks we have:
\begin{align*}
  \abohm{\xi \xi \delta x} &= \annotate{11,1,1,\posemp} (x \abohm{\xi \xi \delta x})\\
  \abohm{\theta \theta I x} &= \annotate{11,1,\posemp,1} (x \abohm{\theta \theta I x})
\end{align*}
which allows us to discriminate the terms. Hence $Y_2 \ne U_2$
(by Corollary~\ref{cor:simple:simple} which generalises to the setting of atomic BT's).
Note that the (non-atomic) clocked BT's can be obtained by
taking the length of the lists of positions.

For lists $\vec{p}, \vec{q}$ of positions, we write $\vec{p} \cdot \vec{q}$
for concatenating $\vec{p}$ to $\vec{q}$.
We write $\hredat{\tuple{p_1,\ldots,p_n}}$ for the rewrite sequence
$\hredat{p_1} \cdots \hredat{p_n}$ consisting of steps at position $p_1$,\ldots,$p_n$.

\begin{definition}[Atomic clock \bohm{} trees]\label{def:abohm}
  Let $t \in \lterm$.
  The \emph{atomic clock \bohm{} tree $\abohm{t}$ of $t$}
  is an annotated infinite term defined as follows.
  If $t$ has no hnf, then define $\abohm{t}$ as $\sink$.
  Otherwise,
  there is a head reduction 
  \begin{align*}
    t \hredat{p_1} \cdots \hredat{p_k} \mylam{x_1}{\ldots\mylam{x_n}{y M_1 \ldots M_m}}
  \end{align*}
  of length $k$ to hnf.
  Then we define 
  \[
    \abohm{t}
    = \annotate{\tuple{p_1,\ldots,p_k}}{\mylam{x_1}{\ldots\mylam{x_n}{y \abohm{M_1} \ldots \abohm{M_m}}}}
  \]
\end{definition}

The theory developed for (non-atomic) BT's generalises 
as follows to the atomic trees.
For lists of positions $\vec{p}, \vec{q}$ we define
$\vec{p} \ge \vec{q}$ whenever $\vec{q}$ is a subsequence of $\vec{p}$,
and
$\vec{p} > \vec{q}$ if additionally $\vec{p} \ne \vec{q}$.
Here $\tuple{a_1,\ldots,a_n}$ is a subsequence of $\tuple{b_1,\ldots,b_m}$
if there exist indexes $i_1 < i_2 < \ldots < i_n$ such that
$\tuple{a_1,\ldots,a_n} = \tuple{b_{i_1},\ldots,b_{i_n}}$.

Using this notation for comparing the atomic annotations
(lists of positions),
Proposition~\ref{prop:clocks}, Theorem~\ref{thm:general},
Proposition~\ref{prop:simple}, Theorem~\ref{thm:simple}, and Corollary~\ref{cor:simple:simple}
remain valid
(using basically the same proofs).

\begin{proposition}\label{prop:scott:free}
  Let $B_n = \leftappiterate{\cxthole(SS)}{S}{n}I$ 
  the fpc-generating vectors from Theorem~\ref{thm:scott:sequence}.
  For $n_1,\ldots,n_k \in \nat$ we define
  $Y^{\tuple{n_1,\ldots,n_k}} = \fpcC B_{n_1} \ldots B_{n_k}$.
  We prove that all these fpc's are inconvertible, that is,
  $\vec{n} \ne \vec{m}$ implies $Y^{\vec{n}} \nconv Y^{\vec{m}}$.
\end{proposition}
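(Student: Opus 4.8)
The plan is to show that each $Y^{\vec n}$ is a simple term, compute its atomic clocked \bohm{} tree explicitly, and then read off enough clock information to separate distinct index vectors. First I would verify simplicity: starting from $\fpcC$ (which reduces to the simple term $\omega_\delta\omega_\delta$ after $\nf{\omega_\delta}$ is taken, as in Example~\ref{ex:boehm:seq}), each building block $B_{n_i} = \leftappiterate{\cxthole(SS)}{S}{n_i}I$ consists only of applications of $S$ to normal forms and the terminal $I$; contracting a redex of the form $SPQ R \to PR(QR)$ where $R$ is among $S,\ldots,S,I$ (all normal forms) is a call-by-value redex, hence simple, and duplicating only normal forms. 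So $Y^{\vec n}$ has a simple reduct $M^{\vec n}$ obtained by normalising the $\omega$-part and leaving the blocks intact.

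Next I would compute the head reduction of $M^{\vec n} x$ to head normal form, following the pattern of the computation already displayed after Theorem~\ref{thm:scott:sequence} and in Example~\ref{ex:scott:seq}. Each pass through one block $B_{n_i}$ contributes a fixed, fully determined burst of head steps at fully determined positions (the $\leftappiterate{SS(\cdots)}{S}{n_i}$ reshuffling takes $3n_i$ steps at positions that march down the spine, then the $SI(\cdots)$ and $Ix(\cdots)$ steps), and crucially the \emph{sequence of positions} of those steps depends on $n_i$ — longer blocks push redexes deeper, so the list of positions recorded by the atomic clock for the $i$-th block is a string determined by $n_i$, and these strings for different values are not subsequences of one another in a way that could be confused across block boundaries. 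The atomic clocked \bohm{} tree is then $\abohm{M^{\vec n} x} = \annotate{\vec w(\vec n)}{(x\,\abohm{M^{\vec n} x})}$, a single infinite spine carrying one long position-list $\vec w(\vec n)$ that is the concatenation of the per-block contributions in order. Since $M^{\vec n}$ and $M^{\vec m}$ are simple, by the atomic-clock analogue of Corollary~\ref{cor:simple:simple} it suffices to show $\vec w(\vec n) \ne \vec w(\vec m)$, in fact that neither is a subsequence of the other, whenever $\vec n \ne \vec m$.

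The last step is the combinatorial core: encode the block-boundary markers inside the position lists. I would pick out a distinguished position pattern emitted exactly once per block (for instance the step corresponding to the $SI(\cdots)$ contraction, which occurs at a characteristic short position, or a position containing a $2$ exactly once), so that the number of blocks $k$ is recoverable from $\vec w(\vec n)$ as a subsequence-invariant count, and within the $i$-th segment the length $3n_i + c$ (with $c$ the fixed overhead per block) is likewise recoverable. Then if $Y^{\vec n} \conv Y^{\vec m}$, confluence gives a common reduct, Proposition~\ref{prop:simple} (atomic version) forces $\vec w(\vec n)$ and $\vec w(\vec m)$ to agree eventually, hence $k_n = k_m = k$ and $n_i = m_i$ for all $i$, i.e.\ $\vec n = \vec m$ — contradiction. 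The main obstacle I anticipate is precisely making the "eventually equal up to subsequence" comparison bite: because subsequence equality is weaker than literal equality, I must be careful that the per-block position strings cannot, by deleting and reinserting elements, masquerade as a different block decomposition; the clean way around this is to identify one position value that appears with a controlled, monotone multiplicity tied to $\sum n_i$ and a second marker tied to $k$, so that both $k$ and the multiset $\{n_1,\ldots,n_k\}$ — and with a little more care the ordered tuple, using that the blocks are processed left to right along a strictly descending spine — are determined by any supersequence of $\vec w(\vec n)$.
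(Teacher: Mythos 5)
Your proposal follows essentially the same route as the paper's proof: reduce $Y^{\vec n}x$ to a simple reduct, observe that its atomic clocked \bohm{} tree is a single spine whose annotation is the concatenation of per-block position lists, identify a once-per-block marker from which $k$ and each $n_i$ can be reconstructed, and conclude by the atomic version of Corollary~\ref{cor:simple:simple}. The paper's concrete marker is the unique run of four consecutive head steps at strictly decreasing positions $1^{n+m+2},1^{n+m+1},1^{n+m},1^{n+m-1}$ within each block (note that all head positions here are powers of $1$, so your alternative marker of ``a position containing a $2$'' cannot occur, but your $SI(\cdots)$-contraction marker is essentially the paper's choice).
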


This proposition cannot be proved using (non-atomic) clocks,
as for example: $\cbohm{Y^{\pair{n}{m}}} = \cbohm{Y^{\pair{m}{n}}}$.
We introduce some auxiliary notations.
Let $B'_n = \leftappiterate{\cxthole}{S}{n}I$,
and define $\nf{B_n} = \leftappiterate{\cxthole\nf{(SS)}}{S}{n}I$
where $\nf{SS} = \mylam{abc}{bc(abc)}$.
For $\vec{p} = \tuple{1^{m_1},\ldots,1^{m_k}}$ a list of positions
define $\posexp{\vec{p}}{n} = \vec{p} \cdot (\posexp{\tuple{1^{m_1-1},\ldots,1^{m_k-1}}}{(n-1)})$
for $n > 1$ and $\posexp{\vec{p}}{1} = \vec{p}$.

\begin{proof}
  Let $n_1,\ldots,n_k \in \nat$ and $Y \equiv \theta \theta B'_{n_1} \nf{B_{n_2}} \ldots \nf{B_{n_k}}$
  where $\theta = \mylam{abc}{bc(aabc)}$, then
  $Y^{\tuple{n_1,\ldots,n_k}} x \mred Y x$ (for $k \ge 1$) where $Yx$ is a simple term.
  Apart from the initial and final steps, the rewrite sequence $Yx \mhred x(Yx)$ 
  is composed of $k$ subsequences of the form:
  \begin{align*}
     &S I (\theta\theta \ldots) \nf{B_n} V_m \equiv
    S I \leftappiterate{ (\theta\theta \ldots) \nf{(S S)}}{S}{n} I V_m\\
    &\hredat{\posexp{1^{n+m+3}}{3}}
    I \nf{(SS)} \leftappiterate{ (\theta\theta \ldots \nf{(S S)})}{S}{n} I V_m\\
    &\hredat{1^{n+m+2}}
    \nf{SS} \leftappiterate{ (\theta\theta \ldots \nf{(S S)})}{S}{n} I V_m\\
    &\hredat{\posexp{1^{n+m+1}}{3}}
    SS \leftappiterate{ (\theta\theta \ldots \nf{(S S)})}{S}{n-2} I V_m\\
    &\hredat{\posexp{(\posexp{1^{n+m}}{3})}{(n-1)}}
    S I (\theta\theta\ldots \nf{B_n}) V_m
  \end{align*}
  for every $B_n$ with $n \ge 2$,
  and vector $V_m$ of length $m$.
  
  For every $B_n$ there is exactly one 
  occurrence of four consecutive steps at `decreasing' positions
  $1^{n+m+2}$, $1^{n+m+1}$, $1^{n+m}$, $1^{n+m-1}$ (btw, this also holds for $n < 2$).
  Hence, from the distance between these occurrences
  we can reconstruct the vector $\tuple{n_1,\ldots,n_k}$.
  This shows that $\vec{n} \ne \vec{m}$ implies that $\abohm{Y^{\vec{n}}} \relev{=} \abohm{Y^{\vec{m}}}$ is false,
  and hence we conclude $Y^{\vec{n}} \nconv Y^{\vec{m}}$ by Corollary~\ref{cor:simple:simple}.
\end{proof}

\section{Clocked \levi{} and \ber{} Trees}\label{sec:levi}
In fact, there are three main semantics for the $\lambda$-calculus:
$\sbohm$, $\slevi$, and $\sber$ 
(see \cite{abra:ong:1993,bera:intr:1996,beth:klop:vrij:2000,kenn:vrie:2003,bare:klop:2009}).
The notions from the previous section generalize directly to $\slevi$ and $\sber$ semantics.

\begin{definition}[Clocked \levi{} trees]
  Let $t$ be a $\lambda$-term.
  The \emph{clocked \levi{} tree $\clevi{t}$ of $t$}
  is an annotated potentially infinite term defined as follows.
  If $t$ has no whnf, then define $\clevi{t}$ as $\sink$.
  Otherwise,
  there exists a head rewrite sequence 
  $t \hredn{k} \mylam{x}{M}$ or $t \hredn{k} x M_1 \ldots M_m$ to whnf.
  In this case, we define
  $\clevi{t}$ as 
  $\annotate{k}{\mylam{x}{\clevi{M}}}$ or
  $\annotate{k}{x \clevi{M_1} \ldots \clevi{M_m}}$, respectively.
\end{definition}

\begin{definition}[Clocked \ber{} trees]
  Let $t$ be a $\lambda$-term.
  The \emph{clocked \ber{} tree $\cber{t}$ of $t$}
  is an annotated potentially infinite term defined as follows.
  If $t$ is root-active, let $\cber{t} \equiv \sink$.
  If $t \hredn{k} s$ rewrites to a root-stable term $s \equiv x$, $s \equiv \mylam{x}{M}$ or $s \equiv M N$,
  then define $\cber{t}$ as
  $\annotate{k}{x}$, $\annotate{k}{\mylam{x}{\cber{M}}}$ or $\annotate{k}{\cber{M} \cber{N}}$, respectively.
\end{definition}

\begin{example}
  We consider the terms
  $A \equiv a a$ with $a \equiv \mylam{x}{\mylam{y}{xx}}$
  and
  $B \equiv b b$ with $b \equiv \mylam{x}{\mylam{y}{\mylam{z}{xx}}}$.
  We have:
  \begin{align*}
	\clevi{A} &\equiv \annotate{1}{\mylam{y}{\clevi{A}}}\\
	\clevi{B} &\equiv \annotate{1}{\mylam{y}{\mylam{z}{\clevi{B}}}}
  \end{align*}
  Thus, in $\clevi{A}$ every $\lambda$ requires one head reduction step
  whereas in $\clevi{B}$ every second $\lambda$ is obtained for `free' (that is, in $0$ steps).
  
  We remark that $A$ and $B$ cannot be distinguished
  in the \bohm{} tree semantics since $\cbohm{A} \equiv \cbohm{B} \equiv \bot$.
  In the \bohm{} tree semantics, a term is meaningful only if it has a hnf.
  The \levi{} semantics weakens this condition to whnf's,
  and thereby allows more terms to be distinguished.
  The \ber{} tree semantics is a further weakening
  where only root-active terms are discarded as meaningless.
\end{example}

\section{Concluding remarks}

We conclude with an encompassing conjecture.
\begin{conjecture}
  Building fpc's with fcp-generating vectors is a free construction, 
  that is, there are no non-trivial identifications.
\end{conjecture}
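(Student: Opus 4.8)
The plan is to push the atomic-clock technique of Section~\ref{sec:atom}, and in particular the argument of Proposition~\ref{prop:scott:free}, to full generality. Fix a simple seed fpc --- say a simple reduct of $\fpcC$ --- and a finite word $w = v_1 \cdots v_k$ of fpc-generating vectors, each of the shape $v_i = \cxthole P^{(i)}_1 \ldots P^{(i)}_{n_i}$ with the $P^{(i)}_j$ combinators in normal form (as is the case for every vector obtained by the schemes considered above, e.g.\ $\cxthole\delta$, the Scott vectors $\leftappiterate{\cxthole(SS)}{S}{n}I$ of Theorem~\ref{thm:scott:sequence}, and $\cxthole\xi II$). Write $\fpcC w$ for $\fpcC$ applied to the concatenation of the argument lists of $v_1,\ldots,v_k$, exactly as with the $Y^{\vec n}$ of Proposition~\ref{prop:scott:free}. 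The first step is to check that $\fpcC w$ always reduces to a \emph{simple} term: since the $P^{(i)}_j$ are normal forms, every redex contracted along the head reduction to hnf is call-by-value, hence simple, and none is duplicated. Thus Proposition~\ref{prop:simple} and the atomic-tree form of Corollary~\ref{cor:simple:simple} apply, and by Proposition~\ref{prop:clocks} the simple reduct already realizes the fastest clock reachable from $\fpcC w$.

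Next I would compute the atomic clocked \boehm{} tree $\abohm{\fpcC w\, x}$. As in Proposition~\ref{prop:scott:free}, the head reduction of $(\fpcC w)\, x$ down to $x\,((\fpcC w)\, x)$ decomposes, apart from a fixed prefix and suffix, into $k$ consecutive \emph{blocks}, the $i$-th block ``consuming'' $v_i$ against the $m_i := (\sum_{j>i} n_j) + 1$ trailing arguments still present at that stage. The crux is to extract from the list of head-step positions inside block $i$ a \emph{signature} that determines $v_i$ uniquely: for the Scott-type vectors this is the maximal run of steps at strictly decreasing positions $1^{m_i+n_i}, 1^{m_i+n_i-1}, \ldots$, whose length encodes $n_i$, and the distances between successive signatures then recover the whole list $(n_1,\ldots,n_k)$; for a general vector one must isolate an analogous fingerprint of its head-reduction rhythm. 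Since, modulo a finite prefix, $\abohm{\fpcC w\, x}$ equals the atomic clocked \boehm{} tree of the simple reduct, which is periodic with period one full pass through $v_1,\ldots,v_k$, the word $w$ is recoverable --- up to trivial identifications --- from the atomic tree, and the conjecture follows from Corollary~\ref{cor:simple:simple}: for $w \ne w'$ (modulo those identifications) $\abohm{\fpcC w} \relev{=} \abohm{\fpcC w'}$ is false, hence $\fpcC w \nconv \fpcC w'$.

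The hard part is twofold. First, one must make precise what ``no non-trivial identifications'' means: give a presentation, by generators and relations, of the structure formed by fpc-generating vectors acting on fpc's. The relations one wants to call trivial include $\beta$-conversions between vectors ($\cxthole\delta$ and $\cxthole(SI)$ are literally the same vector, and $\cxthole(AII) \conv \cxthole\delta$) and the fact that some vectors factor through others (the vector $\leftappiterate{\cxthole(SS)}{S}{n}I$ is assembled from smaller pieces); isolating a canonical generating set and proving the list of trivial relations complete is itself open. Second, and more seriously, one must control the interaction between adjacent blocks: composing vectors could a priori create overlaps in the head reduction of the composite that merge or erase the signatures of neighbouring blocks, and ruling this out uniformly over the \emph{infinite} family of vectors --- and over arbitrarily long words, where the trailing-argument counts $m_i$ grow without bound --- is the real obstacle. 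Even with $\fpcC$ as seed the full statement is open, and for an arbitrary seed fpc even the $\delta$-word sub-case is open (Theorem~\ref{thm:intrigila} settles only the single relation $Y \nconv Y\delta$); a complete proof would require a common generalization of the clocked-\boehm{}-tree method and of Intrigila's argument, which is why the statement is offered only as a conjecture.
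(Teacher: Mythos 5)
This statement is a \emph{conjecture}, and the paper offers no proof of it: the authors only list partial evidence (Theorem~\ref{thm:intrigila}, Theorem~\ref{thm:boehm:seq}, the duplicate-freeness of the Scott sequence, and Proposition~\ref{prop:scott:free}) and explicitly say that the general case ``may be beyond current techniques.'' Your proposal correctly recognizes this: it does not actually prove the statement, it lays out an attack via atomic clocks that is exactly the continuation of Proposition~\ref{prop:scott:free} which the paper itself gestures at, and it honestly identifies the two genuine obstacles --- the absence of a precise presentation of the ``trivial'' identifications among vectors, and the lack of uniform control over the interaction of adjacent blocks in the head reduction of a composite. That assessment matches the paper's own position, so as a reading of the state of the problem your answer is right.

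Two cautions on the parts you do assert. First, your normal-form hypothesis on the vector components is already violated by the paper's own examples: $SS$ is a redex, not a normal form (which is precisely why Proposition~\ref{prop:scott:free} works with $\nf{B_n} = \leftappiterate{\cxthole\nf{(SS)}}{S}{n}I$ rather than $\leftappiterate{\cxthole(SS)}{S}{n}I$), so ``every vector obtained by the schemes considered above'' has this shape only after normalizing its components, and one must then argue that this replacement is harmless. Second, the claim that normal-form components force every redex contracted along the head reduction of $\fpcC w\,x$ to be call-by-value is not automatic: head reduction substitutes previously built compounds (such as $\theta\theta\ldots$) into abstraction bodies, and whether the resulting redexes are simple has to be checked per family; it holds in the worked examples but is exactly the kind of uniform statement over all vectors and all word lengths that, as you say yourself, is the real open obstacle. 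So the proposal should be read as a research plan consistent with the paper, not as a proof --- which is also how the paper presents the matter.
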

A first step is found in Intrigila's theorem $Y \delta \nconv Y$,
for any fpc $Y$.
A second step is that the \boehm{} sequence is duplicate-free.
A third step is found in our proof that the Scott sequence is duplicate-free, 
and Proposition~\ref{prop:scott:free}, which states that 
there are no identifications when starting the construction with $\fpcC$.

Other parts of the conjecture are as follows.
Let $Y,Y'$ fpc's and $B_1 \ldots B_n$, $C_1 \ldots C_k$ be fpc-generating vectors.
\begin{enumerate}
  \item
  $Y \delta \conv Y' \delta$ iff $Y \conv Y'$.
  
  \item
  $Y B_1 \ldots B_n  \conv  Y' B_1 \ldots B_n$ iff $Y = Y'$.
  
  \item
  \mbox{$Y B_1 \ldots B_n  \nconv Y C_1 \ldots C_k$ if 
  $B_1 \ldots B_n \not\equiv C_1 \ldots C_k$}.
\end{enumerate}

For general fpc's $Y$, $Y'$ these conjectures may be beyond current techniques, 
but for the well-known fpc's of Curry and Turing, and the fpc-generating 
vectors introduced here, including their versions for $n > 3$, 
these problems are tractable.

\bibliographystyle{plain}

\bibliography{main}

\end{document}